\title{The Kirkwood closure point process: A solution of the Kirkwood-Salsburg equations for negative activities\thanks{The research leading to this work
    has been done within the 
    Collaborative Research Center TRR~146; corresponding funding 
    by the DFG is gratefully acknowledged.}}
\author{Fabio Frommer\thanks{Institut f\"ur Mathematik, Johannes
    Gutenberg-Universit\"at Mainz, 55099 Mainz, Germany
    ({\tt fabiofrommer@uni-mainz.de})}}
\date{\today}
\renewcommand\@biblabel[1]{#1.}
\def\proofthm[#1]{\par{\it Proof of Theorem #1}. \ignorespaces}
\newcommand{\nocontentsline}[3]{}
\let\origcontentsline\addcontentsline
\newcommand\stoptoc{\let\addcontentsline\nocontentsline}
\newcommand\resumetoc{\let\addcontentsline\origcontentsline}
\begin{document}

\maketitle

\stoptoc
\begin{abstract} 
The Kirkwood superposition is a well-known tool in statistical physics to approximate the $n$-point correlation functions for $n\geq 3$ in terms of the density $\rho $ and the radial distribution function $g$ of the underlying system. However, it is unclear whether these approximations are themselves the correlation functions of some point process. If they are, this process is called the Kirkwood closure process. For the case that $g$ is the negative exponential of some nonnegative and regular pair potential $u$ existence of the Kirkwood closure process was proved by Ambartzumian and Sukiasian. This result was generalized to the case that $u$ is a locally stable and regular pair potential by Kuna, Lebowitz and Speer, provided that $\rho$ is sufficiently small. In this work, it is shown that it suffices for $u$ to be stable and regular to ensure the existence of the Kirkwood closure process. Furthermore, for locally stable $u$ it is proved that the Kirkwood closure process is Gibbs and that the kernel of the GNZ-equation satisfies a Kirkwood-Salsburg type equation.    
\end{abstract}

\begin{keywords}
Realizability, Point processes, Gibbs point processes, Kirkwood-Salsburg equations, radial distribution function
\end{keywords}

\begin{AMS}
{\sc 82B21, 60G55}
\end{AMS}

\section{Introduction}
In classical statistical physics, point processes are often used to describe the distribution of interacting particles in equilibrium.
Often, so-called \emph{Gibbs measures} are used. In these models the energy of a configuration of particles is calculated via some interaction potentials and a configuration is more likely to be observed when the associated energy is low. However, in general, it is not possible to measure these interaction potentials nor calculate them easily from given snapshots of these configurations, see e.g.~\cite{HansenMcDonald13}.\\
In practice, the available data are the so-called \emph{$n$-point correlation functions} $\rho^{(n)}$ of the underlying point process. However, while it is possible to calculate them for arbitrary $n$, these calculations get very computationally expensive as soon as $n>2$ as good statistics require long simulation times and $n$-tuples of particles have to be counted. Thus, commonly the \emph{Kirkwood superposition approximation}, introduced by Kirkwood in \cite{Kirkwood42}, is used, cf.~\cite{HansenMcDonald13}, to approximate the higher-order correlation functions, i.e.
\begin{align}\label{eq:approxgen}
	\rho^{(n)}(\xx_n) \approx
    \rho^n \prod_{1\leq i<j\leq n} g(x_i-x_j).
\end{align}
Here $\xx_n=(x_1,\dots,x_n)$ and $g=\rho^{(2)}/\rho^2$ is the so-called \emph{radial distribution function} of the point process. In \cite{Ambartzumian91} the question has been raised whether there is a point process $\KK$ whose correlation functions are given by the right-hand side of \req{approxgen}. This means the closed form expression of the  correlation functions of the process $\KK$ are given by the Kirkwood superposition, 
thus this point process $\KK$ is called the \emph{Kirkwood closure process}. In this work sufficient conditions for the existence of $\KK$ are investigated.\\
This question is related to an interesting inverse problem, namely, a realizability problem for point processes, see \cite{Kuna07}:\\
\glqq Given $\rho>0$ and a nonnegative function $g$, does there exist a point process with density $\rho$ and radial distribution function $g$?\grqq{} \\
The Kirkwood closure process is one possible ansatz for the solution of this problem.\\
For the case that $g\leq 1$ Ambartzumian and Sukiasian showed in \cite{Ambartzumian91} that the Kirkwood closure process exists when $\rho$ is small enough. Later, using a different technique this result was extended by Kuna, Lebowitz and Speer in \cite{Kuna07}.\\
In the language of statistical mechanics Ambartzumian and Sukiasian showed the existence of the Kirkwood closure when $g=e^{-u}$ where $u$ is some nonnegative and \emph{regular} pair potential and Kuna, Lebowitz and Speer extended the result for the case that $u$ is a pair potential which is \emph{locally stable} (e.g. when $u$ has a \emph{hard-core}, i.e.~$u=+\infty$ around the origin) and regular. In this work a connection between the well-known \emph{Kirkwood-Salsburg equations} and the Kirkwood closure process is used to show existence of the latter when $u$ is a \emph{stable} and {regular} pair potential. In fact, the so-called \emph{Janossy densities} of the Kirkwood closure process are (up to a factor) the solutions of the Kirkwood-Salsburg equations for a negative activity. In particular, this solution has many well-known properties, cf.~\cite{Ruelle69}.\\
The outline is as follows: After introducing the setting in Section \ref{sec:setting}, the existence of the Kirkwood closure is proved in Section \ref{sec:results}. In Section \ref{sec:GNZresult} the Gibbsianness of the Kirkwood closure is discussed and in the last Section generalizations of to higher order closures are discussed.
\section{Setting}\label{sec:setting}
\subsection{The Kirkwood closure process}
Any probability measure $\PP$ on the space of configurations
\begin{align*}
    \Gamma \,=\, \bigl\{\, \gamma \subset \mathbb{R}^d \ \Big\vert\ 
                   \Delta  \subset \R^d \text{ bounded}
                   \,\Rightarrow \, N_\Delta(\gamma)<+\infty \bigr\}\,,
\end{align*}
equipped with the $\sigma$-algebra $\mathcal{F}:=\sigma (N_\Delta \mid \Delta \subset \R^d \text{ bounded})$ is called a \emph{point process}. Here $N_\Delta(\gamma)= \#(\gamma_\Delta)$ ($\gamma_\Delta=\gamma\cap \Delta$) is the number of elements of $\gamma$ in $\Delta$. 
$\Gamma_0=\{\gamma\in\Gamma\mid \#\gamma<+\infty\} $ denotes the space of finite configurations. The elements of a family of symmetric functions $(j^{(n)}_\Lambda)_{n\geq 0, \,\Lambda \subset \R^d \text{ bounded}}$ are called the \emph{Janossy densities} of $\PP$, if for every $F\colon \Gamma \to [0,+\infty)$ such that $F(\gamma)=F(\gamma_\Lambda)$ for every $\gamma\in\Gamma$ there holds
\begin{align}
	\int_{\Gamma} F(\gamma) \dP (\gamma)
	= \sum_{n=0}^\infty\frac{1}{n!}
	\int_{\Lambda^n}F(\{\xx_n\})\jan{n}(\xx_n)\dxx_n
\end{align}
where the term for $n=0$ is understood to be $ F(\emptyset)\jan{0}$. Any function with the property $F(\gamma)=F(\gamma_\Lambda)$ for some bounded $\Lambda \subset \R^d$ is called \emph{local}.
If the Janossy densities of a point process exist, they are unique up to (Lebesgue) null-sets and determine $\PP$ completely.\\ 
The elements of a family of symmetric functions  $(\rho^{(n)})_{n\in\N}$ are called the \emph{correlation functions} of $\PP$, if for every $n\in\N$ and $F\colon (\R^d)^n\to [0,+\infty)$ there holds
\begin{align}
	\int_{\Gamma} 
	\sum_{x_1,\dots,x_n\in \gamma \atop x_i\neq x_j}
	 F(\xx_n) \dP (\gamma)
	 =
	 \int_{(\R^d)^n}F(\xx_n) \rho^{(n)}(\xx_n)\dxx_n.
\end{align}
Also note the formula
\begin{align}\label{eq:janossytocorr}
    \rho^{(n)}(\xx_n) = \sum_{k=0}^\infty \frac{1}{k!}\int_{\Lambda^k}j_\Lambda^{(n+k)}(\xx_n,\yy_k)\dyy_k, \qquad \xx_n \in  \Lambda^n.
\end{align}
Here $j_\Lambda^{(n+k)}(\xx_n,\yy_k) = j_\Lambda^{(n+k)}(x_1,\dots,x_n,y_1,\dots,y_k)$ for brevity. If the point process $\PP$ is \emph{stationary}, then the correlation functions are \emph{translationally invariant}, and one can write $\rho^{(n)}= \rho^n g^{(n)}$ for appropriate functions $g^{(n)}$ depending on $n-1$ variables, where $\rho$ is the so-called \emph{intensity} or \emph{density} of the point process. For $n=2$ the function $g=g^{(2)}$ is the so-called \emph{radial distribution function}.\\ 
As mentioned in the introduction, a  point process $\KK_{\varsigma,\phi}$ is called \emph{Kirkwood closure process}, if it has correlation functions and there is a $\varsigma>0$ and an even nonnegative function $\phi\colon\R^d\to [0,+\infty)$ such that 
\begin{align}\label{eq:easycorrelations}
    \rho^{(n)}(\xx_n) =  
    \varsigma^n \prod_{1\leq i < j\leq n} \phi(x_i-x_j),
\end{align}
where the empty product is understood to be equal to one. In particular, this means that for the Kirkwood closure the approximation \req{approxgen} is an equality. The existence of the Kirkwood closure will be discussed in Section \ref{sec:results}.\\
The correlation functions $(\rho^{(n)})_{n\geq 1}$ of a point process $\PP$ satisfy \emph{Ruelle's bound}, if there is a $\xi >0$ such that 
\begin{align}\label{eq:Rbound}
	\rho^{(n)}(\xx_n) \leq \xi^n. \tag{$\mathcal{R}_\xi$}
\end{align}
In this case it is said that $\PP $ satisfies condition \req{Rbound}. Any point process $\PP$ satisfying condition \req{Rbound} has a number of nice properties. Firstly, in this case the correlation functions determine $\PP$ uniquely, see \cite{KunaPhD}. Secondly, $\PP$ is supported on a set of \glqq nice\grqq{} configurations. Namely, any point process $\PP$ satisfying condition \req{Rbound} is supported on the \emph{tempered configurations}
\begin{align}\label{eq:tempconf}
    \Gamma_* :=  \bigcup_{M \geq 1} 
    \bigcap_{n \geq 0}
    \left\lbrace
    \gamma \in \Gamma \mid N_{\Delta_n} (\gamma) \leq M \lebesgue (\Delta_n) 
    \right\rbrace
\end{align}
where $\Delta_n = \{x\in\R^d \mid n\leq |x| < n+1\}$ and $\lebesgue(\cdot) $ denotes the Lebesgue measure on $\R^d$, i.e.~$\PP(\Gamma_*)=1$, cf.~e.g. Theorem 2.5.4 of \cite{KunaPhD}. In this case $\PP$ is called \emph{tempered}. Lastly, for any point process satisfying condition \req{Rbound} the inverse to \req{janossytocorr} holds, i.e.~for any bounded $\Lambda\subset \R^d$ there holds
\begin{align}\label{eq:janossyinverse}
    j_\Lambda^{(n)}(\xx_n) = \sum_{k=0}^\infty\frac{(-1)^k}{k!} \int_{\Lambda^k}\rho^{(n+k)}(\xx_n,\yy_k)\dyy_k, \qquad \xx_n \in  \Lambda^n
\end{align}
where for $n=0$ the term $\rho^{(0)}=1$, see e.g.~\cite{KunaPhD}. In fact, \req{janossyinverse} can also be used to define a point process:
\begin{theoremold}\label{thm:lenard}[Lenard \cite{Lenard75}]
Let $(\rho^{(n)})_{n\geq 1}$ be a family of nonnegative symmetric functions that satisfy \req{Rbound} for some $\xi>0$ such that for all $n\in\N$, all bounded $\Lambda\subset \R^d$ and all $\xx_n\in\Lambda^n$
\begin{align}\label{eq:lenpos1}
    \sum_{k=0}^\infty\frac{(-1)^k}{k!} \int_{\Lambda^k}\rho^{(n+k)}(\xx_n,\yy_k)\dyy_k \geq 0
\end{align}
and 
\begin{align}\label{eq:lenpos2}
    1+\sum_{k=1}^\infty\frac{(-1)^k}{k!} \int_{\Lambda^k}\rho^{(k)}(\yy_k)\dyy_k \geq 0.
\end{align}
Then there exists a point process $\PP$ with correlation functions $(\rho^{(n)})_{n\geq 1}$.
\end{theoremold}\noindent
The conditions \req{lenpos1} and \req{lenpos2} are called \emph{Lenard positivity}. In general, it is not easy to check whether a family $(\rho^{(n)})_{n\geq 1}$ satisfies the Lenard positivity condition. However, for the correlation functions of the Kirkwood closure process sufficient conditions for Lenard positivity have been given. First by Ambartzumian and Sukiasian in \cite{Ambartzumian91} and later these were generalized by Kuna, Lebowitz and Speer in \cite{Kuna07}.
Ambartzumian and Sukiasian relied on an approach using a cluster expansion and Kuna et al.~used an ansatz via modified Kirkwood-Salsburg equations related to the Mayer-Montroll equations, which are both well-known tools from classical statistical mechanics. As previously mentioned, in this work an approach using properties of the {Kirkwood-Salsburg} equations is used to extend their results.
 
\subsection{The Kirkwood-Salsburg operator}\label{ss:KSOP}
The \emph{Kirkwood-Salsburg equations} are a well-known tool for \emph{grand-canonical Gibbs measures}, cf.~\cite{Ruelle69}. Let $u\colon\R^d\to \R\cup\{+\infty\}$ be an even function bounded from below to which a translationally invariant Hamiltonian $H\colon \Gamma_0\to \R\cup\{+\infty\}$ is associated by
\begin{align}\label{eq:Hpair}
    H (\gamma) = \frac{1}{2}\sum_{x\neq y \in \gamma}u(x-y).
\end{align}
The function $u$ is called a translationally invariant \emph{pair potential}. For $\beta>0$ the \emph{Mayer function} of $u$ at \emph{inverse temperature} $\beta$ is defined as
\begin{align}\label{eq:mfct}
    f_\beta\,(x) = e^{-\beta u(x)}-1.
\end{align}
Throughout it is assumed that $u$ is \emph{regular}, i.e.~that 
\begin{align}\label{eq:cbeta}
    \cbeta := \int_{\R^d}\left|\mayer(x)\right|\dx < +\infty
\end{align}
for all $\beta >0$. In fact, if there is a $\beta_0$ such that $C_{\beta_0}(u)<+\infty$, then $\cbeta$ is finite for all $\beta>0$, cf.~\cite{Ruelle69}. It will further be assumed that the pair potential $u$ (and thus the Hamiltonian $H$) is stable, meaning there is a $B>0$ such that
\begin{align}\label{eq:stability}
    H(\gamma) \geq -B \#\gamma.
\end{align}
\begin{remark}
A sufficient condition for $u$ to be stable and regular, is that $u$ is of \emph{Lennard-Jones type}, i.e. that there exist $r_0>0$, $\alpha>d$, and $ C>c>0$ such that 
\begin{align*}
    u(x) \geq  c|x|^{-\alpha}, \quad |x| < r_0,\quad
    \text{ and }\quad
    |u(x)| \leq C|x|^{-\alpha}, \quad |x| \geq  r_0.
\end{align*}
\end{remark}\noindent \\
The \emph{interaction} between $\eta \in\Gamma_0$ and $\gamma\in\Gamma$ is defined by
\begin{align}\label{eq:winteraction}
    W(\eta\mid \gamma) :=
    \begin{dcases}
        \sum_{x\in\eta,y\in\gamma} u(x-y), \qquad \,\,\text{ if }\quad
        \sum_{x\in\eta,y\in\gamma} |u(x-y)| < +\infty \\
        +\infty,\qquad\qquad\qquad\quad \text{ otherwise.}
    \end{dcases}
\end{align}
For two finite configurations $\eta,\gamma\in \Gamma_0$, there holds
\begin{align}\label{eq:intprop}
    H(\eta\cup \gamma)
    =H(\eta)
    +W(\eta\mid\gamma)
    + H(\gamma).
\end{align}
From \req{intprop} it follows that if $H(\eta)=+\infty $ then $H(\eta \cup \{x\})=+\infty$ for all $x\in \R^d$, this means that $H$ is \emph{hereditary}. Since $u$ is assumed to be a stable pair potential, every configuration $\xx_n$ has an element $x_{i_*}$ with $i_*=i_*(\xx_n)\in\{1,\dots,n\}$ such that 
\begin{align}\label{eq:locstab}
    W(\{x_{i_*}\}\mid \{\xx_{n-1}'\})=
    \sum_{\substack{i=1 \\ i\neq i_*}}^n u (x_i-x_{i_*})
    \geq -2B
\end{align}
where $\xx_{n-1}'=(x_1,\dots,x_{i_*-1},x_{i_*+1},\dots,x_n)$ is the configuration of the remaining elements, cf.~\cite{Ruelle69}. In case there is more than one possible choice such that \req{locstab} holds, let $i_*$ be the smallest index with this property. If this property holds for every $n$ and every choice of $i$, i.e.~for any $n\geq 1$ and $x,x_1,\dots,x_n\in\R^d$ there holds
        \begin{align}\label{eq:localstab2}
            W(\{x\}\mid \{\xx_{n}\})=
            \sum_{   i=1}^n u (x_i-x)
            \geq-2B,
        \end{align}
then $u$ is called \emph{locally stable}, cf.~\cite{Kendall99}. Note that local stability is more restrictive than stability as every locally stable pair potential is stable.
\\
For $\zeta>0$, let
\begin{align}\label{eq:Exi}
    E_{\zeta} :=\left\lbrace 
    \bo= (\omega^{(n)})_{n\geq 1}
    \;\middle|\;
    \omega^{(n)}\colon (\R^d)^n\to \C, \,\,\,
    \|\bo\|_\zeta<+\infty
    \right\rbrace
\end{align}
be the Banach space of sequences of complex $L^\infty$-functions with an increasing number of variables, for which the norm
\begin{align*}
    \|\bo\|_\zeta:=
    \sup_{n\geq 1}\left(\zeta^n\|\omega^{(n)}\|_\infty\right)
\end{align*}
is finite and introduce the \emph{Kirkwood-Salsburg} operator $\K \colon E_{\cbeta} \to E_{e^{-2\beta B}\cbeta}$ as
\begin{align}\label{eq:Kdef1}
    (\K\bo)^{(1)}(x )=
    \sum_{k=1}^\infty\frac{1}{k!}\int_{(\R^d)^k}
    \prod_{i=1}^k \mayer(x-y_i)
    \omega^{(k)}(\yy_k)\dyy_k
\end{align}
and for $n\geq 1$ as
\begin{align}\label{eq:Kdef2}
     (\K\bo)^{(n+1)}(x,\xx_{n})=  
     e^{-\beta W(\{x \}\mid \{\xx_{n}\})}
    \left(
    \omega^{(n)}(\xx_{n})+
    \sum_{k=1}^\infty\frac{1}{k!}\int_{(\R^d)^k}
    \prod_{j=1}^k \mayer(x-y_j)
    \omega^{(n+k)}(\xx_{n},\yy_k)\dyy_k\right).
\end{align}
Defining the permutation operator $\BPi \colon \Ec \to \Ec$ by $(\BPi\bo)^{(n)}(\xx_n)=\omega^{(n)}(x_{i_*},\xx_{n-1}') $, one finds by \req{locstab} that
\begin{align*}
    \sup_{n\geq 1} \cbeta^n \|(\BPi\K\bo)^{(n)}\|_\infty 
    &\leq
    \cbeta^n e^{2\beta B} \sum_{k=0}^\infty \frac{1}{k!}
    \int_{(\R^d)^k}
    \prod_{j=1}^k |\mayer(x_{i_*}-y_j)|
    \cbeta^{-n-k+1}
    \|\bo\|_{\cbeta}
    \dyy_k \\
    &\leq e^{2\beta B+1}\cbeta 
\end{align*}
and thus $\BPi\K \colon \Ec\to \Ec $ is well-defined with $\|\BPi\K\|_{\Ec\to \Ec} \leq e^{2\beta B+1}\cbeta$. Lastly, for some bounded set $\Lambda\subset\R^d$ let $\bx\colon\Ec \to \Ec$ be the projection operator
\begin{align*}
    \bx\colon  &\Ec \to \Ec\\
    &\bo \mapsto \bx\bo = (\mathds{1}_{\Lambda^n}\omega^{(n)})_{n\geq 1},
\end{align*}
$\II\colon \Ec \to \Ec$ be the identity and $\ee=(e_1^{(n)})_{n\geq 1}$ be the vector in $\Ec$ with $e^{(1)}_1\equiv 1$ and $e^{(n)}_1\equiv 0$ for $n\geq 2$. \\
For a given $z\in\C$ and bounded $\Lambda\subset \R^d$ consider the \emph{finite volume Kirkwood-Salsburg equations} defined by
\begin{align}\label{eq:fvKS}
    (\II-z\bx\BPi\K)\bo = z\bx\ee.
\end{align}
In the context of statistical mechanics $z$ (usually $z>0$) is called the \emph{activity} of the \emph{grand-canonical ensemble} associated to $(\beta,z,u)$. 
It is well-known that for $z\in\Bz:= \{z\in\C \mid |z|<z_0\}$ where 
\begin{align}\label{eq:gasphase}
    z_0:= \left( e^{2\beta B+1}\cbeta \right)^{-1}
\end{align}
there is a unique solution to \req{fvKS} which can be developed into a Neumann-series, i.e. the solution is given by
\begin{align}\label{eq:fvKSsol}
    \bt_\Lambda(z) = (\II-z\bx\BPi\K)^{-1} z\bx\ee = \sum_{k=0}^\infty (z\bx\BPi\K)^kz\bx\ee.
\end{align}
In particular this means that for each $n$ and $x_1,\dots,x_n\in\Lambda$ the function $\theta_\Lambda^{(n)}(z;\xx_n) $ is an analytic function on $B_{z_0}$.
Furthermore, the solution of \req{fvKS} can be written down explicitly using the \emph{grand canonical partition function}
\begin{align}\label{eq:partfct}
    \Xi_\Lambda(z)= 1 +\sum_{k=1}^\infty\frac{z^k}{k!}
    \int_{\Lambda^k} e^{-\beta H(\{\yy_k\})}\dyy_k.
\end{align}
As shown by Ruelle, see \cite{Ruelle69}, $\Xi_\Lambda(z)\neq 0$ for $z\in\Bz$, which implies that 
\begin{align}\label{eq:solxpl}
    \theta_\Lambda^{(n)}(z;\xx_n)= \frac{1}{\Xi_\Lambda(z)}
    \sum_{k=0}^\infty\frac{z^{n+k}}{k!}
    \int_{\Lambda^k} e^{-\beta H(\{\xx_n,\yy_k\})}\dyy_k.
\end{align}
\begin{remark}
From the proof of Theorem \ref{thm:mainthm} one will see that the Janossy densities of the Kirkwood closure process $\KK_{\varsigma,\phi}$ for $\varsigma=z$ and $\phi=e^{-\beta u}$ are given by
\begin{align}\label{eq:jask}
    j_\Lambda^{(n)}(z;\xx_n)=(-1)^n\Xi_\Lambda(-z)\theta^{(n)}_\Lambda(-z;\xx_n).
\end{align}
In particular, the probability of finding no points in a given bounded set $\Lambda\subset\R^d$ is given by 
\begin{align*}
    \KK_{z,e^{-\beta u}}(N_\Lambda = 0 ) = \Xi_\Lambda(-z) \quad 
    \text{ for all } z \in (0,z_0).
\end{align*}
This resembles results about non-vanishing probabilities in statistical mechanics, see e.g.~the fundamental theorem in \cite{Scott06} for the case of lattice gases.    
\end{remark}\\
\begin{remark}\label{rem:KSnoswitch}
Note that the solution $\bt_\Lambda$ of \req{fvKS} also satisfies the Kirkwood-Salsburg equation without the permutation operator $\BPi$, namely,
\begin{align*}
    (\II-z\bx\K)\bt_\Lambda = z\bx\ee
\end{align*}
by construction. 
\end{remark}\\
The argument that $\Xi_\Lambda(z)\neq 0$ by Ruelle is as follows: For $z>0$ and $n=1$ integration of \req{solxpl} with respect to $x$ and differentiation of \req{partfct} with respect to $z$ shows that 
\begin{align}\label{eq:anal}
	\int_{\Lambda}\theta^{(1)}_\Lambda(z;x)\dx =z \frac{\diff}{\diff z}\log \Xi_\Lambda(z).
\end{align}
Since by \req{fvKSsol} the left-hand side is analytic in $\Bz$  this implies that the right-hand can also be continued as an analytic function, meaning $\Xi_\Lambda(z)$ does not have any zeros in $\Bz$. Using a similar argument Kuna, Lebowitz and Speer, see \cite{Kuna07}, to prove the existence of the Kirkwood closure process for \emph{locally stable} interactions. This will be elaborated on in Subsection \ref{ss:locstab}.\\
To conclude this section some more properties of the solutions of \req{fvKS} will be stated. It follows from \req{fvKSsol} that the solutions $(\theta_\Lambda^{(n)}(z;\cdot))_{n\geq 1}$ satisfy
\begin{align}\label{eq:fvStable}
    \left| \theta_\Lambda^{(n)}(z;\xx_n)\right| 
    \leq 
    \left(\frac{1}{\cbeta}\max\left\lbrace 
    \frac{\cbeta |z|}{1-|z|/z_0},1 \right\rbrace\right)^n.
\end{align}
This bound is independent of $\Lambda$ and it can be shown that when choosing a sequence of increasing sets $\Lambda_l\subset \Lambda_{l+1}$ such that for any bounded set $\Delta\subset \R^d$ there is an $l_0$ such that $\Delta \subset \Lambda_{l_0}$ (this limit is denoted by $\Lambda\nearrow\R^d$) the solutions of \req{fvKS} converge in the weak$*$ topology to some $\bt = (\theta^{(n)})_{n\geq 1}$, i.e.
\begin{align}\label{eq:weakstar}
    \lim_{\Lambda\nearrow\R^d} \left|\int_{(\R^d)^n }F(\xx_n)\,\theta^{(n)}(z;\xx_n)\dxx_n -\int_{\Lambda^n }F(\xx_n)\,\theta^{(n)}_\Lambda(z;\xx_n)\dxx_n\right|=0
\end{align}
for any $n\geq 1$ and $F\in L^1((\R^d)^n)$ which is the unique solution of the \emph{infinite volume Kirkwood-Salsburg equations}
\begin{align}\label{eq:KS}
    (\II-z\BPi\K)\bt = z\ee.
\end{align}
For $z>0 $ the solutions $(\theta^{(n)}_\Lambda)_{n\geq 1}$ of the finite volume Kirkwood-Salsburg equations \req{fvKS} are the correlation functions of the so-called \emph{grand canonical Gibbs measure} $\GG_{\Lambda,\beta,z,u}$ on $\Lambda$. It can be shown  finite volume Gibbs measures converge to a limit $\PP_{\beta,z,u}$, cf.~\cite{Ruelle70}. This limit is tempered and satisfies the \emph{(multivariate) GNZ-equation} (named for Georgii, Nguyen and Zessin), i.e.~for every $F\colon (\R^d)^n\times \Gamma\to [0,+\infty]$ there holds
\begin{align}\label{eq:GNZeq}
    \int_{\Gamma}\sum_{x_1,\dots,x_n\in\eta} F(\xx_n;\eta) \dP_{\beta,z,u}
    =\int_{(\R^d)^n}\int_\Gamma F(\xx_n;\eta\cup \{\xx_n\}) z^ne^{-\beta H(\xx_n)-\beta W(\{\xx_n\}\mid \eta)} 
    \dP_{\beta,z,u} \dxx_n.
\end{align}
Thus, $\PP_{\beta,z,u}$ is a so-called \emph{$(\beta,z,u)$-Gibbs measure} and the correlation functions of $\PP_{\beta,z,u}$ solve \req{KS}. The function
\begin{align}\label{eq:papa}
    \kappa_{\beta,z,u}(\xx_n;\eta) := z^ne^{-\beta H(\xx_n)-\beta W(\{\xx_n\}\mid \eta)} 
\end{align}
is also called a \emph{Papangelou kernel.} 
\\
\begin{remark}\label{rem:conv}
It is shown in \cite{Ruelle70} that when taking the limit $\Lambda\nearrow\R^d$ the associated solutions of \req{fvKS} also converge uniformly on compacts to the solution of \req{KS}, i.e.~for any $n\geq 1$, $\Delta\subset \R^d$ compact there holds
\begin{align}\label{eq:unifcpts}
    \lim_{\Lambda \nearrow\R^d} \sup_{\xx_n\in\Delta^n}
    \left|\theta^{(n)}(z;\xx_n)-\theta^{(n)}_\Lambda(z;\xx_n)\right| = 0.
\end{align}
Thus, \req{unifcpts} and Remark \ref{rem:KSnoswitch} imply that the solution $\bt$ of \req{KS} also satisfies
\begin{align}\label{eq:KSnoswitch}
    (\II-z\K)\bt = z\ee.
\end{align}
In other words, there holds 
\begin{align}\label{eq:ksns}
    \theta^{(n+1)}(z;x,\xx_{n})
    =z
    e^{-\beta W(\{x \}\mid \{\xx_{n}\})}
    \sum_{k=0}^\infty\frac{1}{k!}\int_{(\R^d)^k}
    \prod_{j=1}^k \mayer(x-y_j)
    \theta^{(n+k)}(z;\xx_{n},\yy_k)\dyy_k
    .
\end{align}
\end{remark}\\\noindent
Lastly, some dualities between the solutions of \req{fvKS} and \req{KS} for $z\in \R$ are noted. 
\begin{itemize}
    \item $z>0$: ${\theta}_\Lambda^{(n)}(z,\cdot)= \rho^{(n)}_{\Lambda}$ are the correlation functions of the {grand canonical Gibbs measure} $\GG_{\Lambda,\beta,z,u}$ on $\Lambda$ and thus the underlying measure is a \emph{different} measure for different sets $\Lambda$ and $\Lambda'$. In the limit $\Lambda \nearrow\R^d$ these correlation functions converge to the solution of \req{KS}, i.e.~the correlation functions of the infinite volume measure $\PP_{\beta, z,u}$. Since the Hamiltonian associated to $u$ is stable these correlation functions satisfy Ruelle's bound by virtue of \req{fvStable}.
    \item $z<0$: ${\theta}_\Lambda^{(n)}(z,\cdot)= (-1)^n j^{(n)}_\Lambda/j_\Lambda^{(0)}$ is a quotient of Janossy densities of the \emph{same} underlying point process (which is the Kirkwood closure process). Heuristically, one can interpret this quotient as a so-called \emph{Boltzmann factor}, i.e. there is some Hamiltonian $H_\Lambda$ such that 
    \begin{align*}
        \frac{j^{(n)}_\Lambda}{j_\Lambda^{(0)}} = e^{-H_\Lambda}.
    \end{align*}
    This Hamiltonian is stable by virtue of \req{fvStable} and depends on the set $\Lambda$ since the Janossy densities contain averaged information of the outside of $\Lambda$. In the same way as for $z>0$ one can expect that $e^{-H_\Lambda}$ converges to some Hamiltonian $H_\KK$ for which the Kirkwood closure process is Gibbs, as previously mentioned, this will be discussed in Section \ref{sec:GNZresult}.
\end{itemize}
The above discussion motivates the definition of  
the Hamiltonian $H_\KK$ by
\begin{align}\label{eq:kirkwoodH}
    H_\KK(z;\{\xx_n\}) := -\log\iota^{(n)}(z;\xx_n)
\end{align}
with
\begin{align}\label{eq:iota}
    \iota^{(n)}(z;\xx_n):= (-1)^n \theta^{(n)}(-z;\xx_n) , \qquad \xx_n \in (\R^d)^n.
\end{align}
This Hamiltonian is stable because it follows from \req{fvStable} and \req{unifcpts} that 
\begin{align}
    0\leq \iota^{(n)}(z;\xx_n) \leq \left(\frac{1}{\cbeta}\max\left\lbrace \frac{\cbeta |z|}{1-|z|/z_0},1 \right\rbrace\right)^n.
\end{align}
Furthermore, the Hamiltonian includes a non-trivial one-body term, i.e. the activity, given by $\iota^{(1)}$ as each entry of the unique solution to $\bt$ of \req{KS} is invariant under translations of its arguments.

\subsection{Locally stable interactions}\label{ss:locstab}
The local stability condition gives a lot more control over the interaction. In particular, the permutation operator $\BPi$ is not needed to ensure the Kirkwood-Salsburg operator is an endomorphism and boundary conditions for the Kirkwood-Salsburg equations can be introduced. Let $\nu$ be a measure on $(\Gamma,\mathscr{F})$ with $\nu( \Gamma \backslash \Gamma_*) = 0$ and define the spaces
\begin{align*}
    E_{\zeta,\nu}^1:= L^1(\Gamma_0\times\Gamma)=
    \left\lbrace
    \boldsymbol{F}= (F^{(n)})_{n\geq 1}
    \;\middle|\;
    F^{(n)}\colon (\R^d)^{n}\times \Gamma\to \C, \,\,\,
    \|\boldsymbol{F}\|_{1,\nu}<+\infty
    \right\rbrace
\end{align*}
where 
\begin{align*}
    \|\boldsymbol{F}\|_{1,\nu} := 
    \sum_{n=1}^\infty
    \frac{\zeta^{-n}}{n!}
    \int_{(\R^d)^n}\int_\Gamma 
    \left|F^{(n)}(\xx_n;\eta) \right| \diff\nu \dxx_n
\end{align*}
and
\begin{align*}
    E_{\zeta,\nu}^\infty:= L^\infty(\Gamma_0\times\Gamma)=
    \left\lbrace
    \bo= (\omega^{(n)})_{n\geq 1}
    \;\middle|\;
    \omega^{(n)}\colon (\R^d)^{n}\times \Gamma\to \C, \,\,\,
    \|\bo\|_{\infty,\nu}<+\infty
    \right\rbrace
\end{align*}
where 
\begin{align*}
    \|\bo\|_{\infty,\nu}:=
    \sup_{n\geq 1}\left(\zeta^n\!\!\!
    \esssup_{(\xx_n;\eta)\in (\R^d)^n\times\Gamma}\!|\omega^{(n)}(\xx_n;\eta)|\right)
\end{align*}
and the essential supremum is taken with respect to $ \lebesgue^n\times\nu$. Define the operator $\K_{\Gamma} \colon E_{\cbeta,\nu}^\infty \to E_{\cbeta,\nu}^\infty $ by 
\begin{align}\label{eq:KGammadef1}
    (\K_{\Gamma}\bo)^{(1)}(x;\eta)=
    e^{-\beta W(\{x \}\mid \eta)}\sum_{k=1}^\infty\frac{1}{k!}\int_{(\R^d)^k}
    \prod_{i=1}^k \mayer(x-y_i)
    \omega^{(k)}(\yy_k;\eta)\dyy_k
\end{align}
and for $n\geq 1$ by
\begin{equation}\label{eq:KGammadef2}
\begin{split}
    (\K_{\Gamma}\bo)^{(n+1)}(x,\xx_n;\eta)=  
    \,e^{-\beta W(\{x\} \mid \eta\cup \{\xx_n\})}
    \sum_{k=0}^\infty\frac{1}{k!}\int_{(\R^d)^k}
    \prod_{j=1}^k \mayer(x-y_j)
    \omega^{(n+k)}(\xx_n,\yy_k;\eta)\dyy_k.
\end{split}
\end{equation}
By  \req{winteraction} the terms $e^{-\beta W(\{x \}\mid \eta)}$ and $e^{-\beta W(\{x\} \mid \eta\cup \{\xx_n\})}$ are well-defined and as in Subsection \ref{ss:KSOP}, $\|\K_{\Gamma}\|_{E_{\nu}^\infty \to E_{\nu}^\infty} \leq  e^{2\beta B+1}\cbeta $ and thus $\K_{\Gamma}$ is well-defined. 
Further, denote by $\II\colon E_{\cbeta,\nu}^\infty\to E_{\cbeta,\nu}^\infty$ the identity operator, and for $\Lambda\subset \R^d$ by $\bx\colon  E_{\cbeta,\nu}^\infty \to E_{\cbeta,\nu}^\infty$ the projection operator 
\begin{align*}
    \bx\colon  &E_{\cbeta,\nu}^\infty \to E_{\cbeta,\nu}^\infty \\
    &\bo \mapsto \bx\bo = (\mathds{1}_{ \Lambda^n}\omega^{(n)}(\,\cdot\,; \,\cdot\cap\Lambda\,)_{n\geq 1}.
\end{align*}
As $\|\K_{\Gamma}\|_{E_{\cbeta,\nu}^\infty \to E_{\cbeta,\nu}^\infty} \leq  e^{2\beta B+1}\cbeta $ the operators $\II-z\bx\K_{\Gamma}$ and $\II-z\K_{\Gamma}$ are invertible for every $z \in \Bz$.
In particular, the equations 
\begin{align}\label{eq:fvKNS}
    (\II-z\bx\K_{\Gamma})\bo = z\bx \al
\end{align}
and 
\begin{align}\label{eq:KNS}
    (\II-z\K_{\Gamma})\bo = z \al
\end{align}
have unique solutions $\btt_{\Lambda}(z)= (\vartheta_{\Lambda}^{(n)}(z;\,\cdot\,;\,\cdot\,))_{n\geq 1}$ and $\btt(z)=(\vartheta^{(n)}(z;\,\cdot\,;\,\cdot\,))_{n\geq 1}$ in $E^\infty_\Gamma$ for any right-hand side $\al\in E_{\nu}^\infty$.
\begin{remark}\label{rem:kuna}
To recover the results of Kuna et al. from \cite{Kuna07} let $\widetilde{\xx}_l =\{\widetilde{x}_1,\dots,\widetilde{x}_l\} $ for some $l\in \N$ and $\widetilde{x}_1,\dots,\widetilde{x}_l\in \Lambda$ and take $\nu = \delta_{\widetilde{\xx}_l}$ and $\al =\ee$. It is easy to see that the solution $\btt_{\Lambda}(z)$ of \req{fvKNS} is given by
\begin{align*}
    \vartheta_\Lambda^{(n)}(z;\xx_n ;\widetilde{\xx}_l) = 
    \frac{1}{\Xi_\Lambda (z;\widetilde{\xx}_l)}
    \sum_{k=0}^\infty 
    \frac{z^{n+k}}{k!}
    e^{-\beta H(\{\xx_n,\yy_k\})-\beta W(\{\xx_n,\yy_k\}\mid \widetilde{\xx}_l)} \dyy_k
\end{align*}
where 
\begin{align*}
    \Xi_\Lambda (z;\widetilde{\xx}_l) = 1 + \sum_{k=1}^\infty
    \frac{z^k}{k!} \int_{\Lambda^k}
    e^{-\beta H(\{\yy_k\})-\beta W(\{\yy_k\}\mid \widetilde{\xx}_l)} \dyy_k.
\end{align*}
Using the arguments of Ruelle they conclude $\Xi_\Lambda (z;\widetilde{\xx}_l)\neq 0$ in $\Bz$. Lastly, one can observe that the Janossy densities of the Kirkwood closure with $\varsigma=z$ and $\phi=e^{-\beta  u}$ are given by
\begin{align}\label{eq:mayereqs}
    j_\Lambda^{(n)} (\xx_n) = z^n e^{-\beta H(\{{\xx}_n\})}\Xi_\Lambda (-z; \{\xx_n\}).
\end{align}
It is easy to prove that  
\begin{align*}
    \Xi_\Lambda (-z; \{\xx_n\})
    =1+ \sum_{k=1}^\infty
    \frac{(-1)^k}{k!}
    \int_{\Lambda^k}
    \prod_{j=1}^k\left(e^{-\beta W(\{y_j\}\mid \{\xx_n\})}-1\right)j^{(k)}_\Lambda(\yy_k) \dyy_k
\end{align*}
and thus \req{mayereqs} can be seen a version of the \emph{Mayer-Montroll equation}.
\end{remark}\\
Comparison of \req{mayereqs} and \req{jask} reveals that 
\begin{align*}
    \theta^{(n)}_\Lambda(-z;\xx_n)
    = \frac{(-z)^n e^{-\beta H(\{{\xx}_n\})}\Xi_\Lambda (-z; \{\xx_n\})}{\Xi_\Lambda (-z)} .
\end{align*}
Since $ \Xi_\Lambda (-z; \{\xx_n\})/\Xi_\Lambda (-z)\neq 0$ in $\Bz$ one can conclude that:\noindent
\begin{corollary}\label{corr:pos}
Let $u$ be locally stable and regular, then for any $n\in\N$ and $x_1,\dots,x_n\in \Lambda$ the fraction $\theta^{(n)}(z;\xx_n)/z^n$ is either positive or equal to zero in $\Bz.$ 
\end{corollary}
\begin{remark}\label{rem:hereditary}
Note that $\theta^{(n)}_\Lambda(z;\xx_n)=0$ for some $\xx_n \in \Lambda^n$ also implies $\theta^{(n+k)}_\Lambda(z;\xx_n,\yy_k)=0$ for all $k\geq 1$ and any $\yy_k\in\Lambda^k$ by Corollary \ref{corr:pos}, meaning $\bt$ inherits the hereditarity of the Hamiltonian $H$. 
\end{remark}

\section{Existence of the Kirkwood closure process}\label{sec:results}
The main result can now be stated. 
\begin{theorem}\label{thm:mainthm}
Let $\beta>0$, $z\in(0,z_0)$ (with $z_0$ as in \req{gasphase}) and $u\colon \R^d \to \R\cup\{+\infty\}$ be a stable and regular pair interaction. For $\varsigma=z$ and $\phi= e^{-\beta u}$ the Kirkwood closure process $\KK_{\varsigma,\phi}$ exists and is tempered.
\end{theorem}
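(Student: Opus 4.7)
The plan is to apply Lenard's theorem (Theorem~\ref{thm:lenard}) to the candidate correlation functions
\begin{align*}
    \rho^{(n)}(\xx_n) \,:=\, z^n \prod_{1\leq i<j\leq n} e^{-\beta u(x_i-x_j)} \,=\, z^n e^{-\beta H(\{\xx_n\})}.
\end{align*}
These are nonnegative and symmetric, and Ruelle's bound \req{Rbound} with $\xi = z e^{\beta B}$ is immediate from stability \req{stability}. The nontrivial content is Lenard positivity \req{lenpos1}--\req{lenpos2}.

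The key observation is that the alternating integral on the left-hand side of \req{lenpos1}, applied to these $\rho^{(n)}$, is exactly the Kirkwood-Salsburg solution \req{solxpl} at the negative activity $-z \in B_{z_0}$, up to sign and to the partition-function normalization. Setting
\begin{align*}
    j_\Lambda^{(n)}(\xx_n) := \sum_{k=0}^\infty \frac{(-1)^k}{k!} \int_{\Lambda^k} \rho^{(n+k)}(\xx_n,\yy_k)\dyy_k
    = \sum_{k=0}^\infty \frac{(-1)^k z^{n+k}}{k!}\int_{\Lambda^k} e^{-\beta H(\{\xx_n,\yy_k\})}\dyy_k,
\end{align*}
and comparing with \req{solxpl} at $w=-z$, I would obtain the identity
\begin{align*}
    j_\Lambda^{(n)}(\xx_n) \,=\, (-1)^n\, \Xi_\Lambda(-z)\, \theta_\Lambda^{(n)}(-z;\xx_n), \qquad j_\Lambda^{(0)} = \Xi_\Lambda(-z).
\end{align*}
Thus Lenard positivity reduces to the two sign conditions $(-1)^n\, \Xi_\Lambda(-z)\,\theta_\Lambda^{(n)}(-z;\xx_n) \geq 0$ and $\Xi_\Lambda(-z) \geq 0$, both of which I would establish by analytic continuation from $z=0$ along the real axis.

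By Ruelle's result recalled in the excerpt, $\Xi_\Lambda$ has no zeros on $B_{z_0}$, and since $\Xi_\Lambda(0)=1$ this forces $\Xi_\Lambda(-z)>0$ for $z\in(0,z_0)$. For the higher-order factor I would invoke the nonvanishing lemma announced in the excerpt: for each fixed $\xx_n\in\Lambda^n$, the analytic function $z\mapsto\theta_\Lambda^{(n)}(z;\xx_n)/z^n$ is either identically zero or nowhere zero on $B_{z_0}$. Its value at $z=0$ equals $e^{-\beta H(\{\xx_n\})}$: in the strictly positive case, continuity propagates positivity along $(-z_0,z_0)$ and yields $(-1)^n\theta_\Lambda^{(n)}(-z;\xx_n)\geq 0$; in the hard-core case $\theta_\Lambda^{(n)}(\,\cdot\,;\xx_n)$ vanishes identically and the inequality is trivial. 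Combining with the sign of $\Xi_\Lambda(-z)$ gives $j_\Lambda^{(n)}(\xx_n)\geq 0$ for all $n\geq 0$, and Lenard's theorem then produces a point process with the prescribed correlation functions, which is the Kirkwood closure $\KK_{z,e^{-\beta u}}$.

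The main obstacle is thus the nonvanishing lemma for $\theta_\Lambda^{(n)}/z^n$ on the full disk $B_{z_0}$. Ruelle's proof that $\Xi_\Lambda\neq 0$ rests on the identity $z(d/dz)\log\Xi_\Lambda(z) = \int_\Lambda\theta_\Lambda^{(1)}(z;y)\,\mathrm{d}y$, whose right-hand side is analytic on $B_{z_0}$ and thus allows an analytic continuation of the primitive $\log\Xi_\Lambda$. The natural higher-order analogue is to factor $\Xi_\Lambda(z)\theta_\Lambda^{(n)}(z;\xx_n)/z^n = e^{-\beta H(\{\xx_n\})}\,\Xi_\Lambda^{\xx_n}(z)$, where
\begin{align*}
    \Xi_\Lambda^{\xx_n}(z) := 1+\sum_{k\geq 1}\frac{z^k}{k!}\int_{\Lambda^k}e^{-\beta H(\{\yy_k\})-\beta W(\{\xx_n\}\mid\{\yy_k\})}\dyy_k
\end{align*}
is the grand-canonical partition function of the system with $\xx_n$ acting as a frozen external field, and then to derive an analogous logarithmic-derivative identity for $\Xi_\Lambda^{\xx_n}$. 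The delicate point is to make this work on the \emph{same} disk $B_{z_0}$ uniformly in $\xx_n$, i.e.~with a radius of nonvanishing controlled by the stability constant $B$ of the original pair potential rather than by a local-stability estimate depending on the frozen configuration; this uniformity is precisely what the full stability hypothesis (in contrast to the local stability used in \cite{Kuna07}) is expected to deliver.
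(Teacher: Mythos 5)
Your overall strategy matches the paper's exactly: apply Lenard's theorem to the candidate $\rho^{(n)} = z^n e^{-\beta H(\{\xx_n\})}$, verify Ruelle's bound with $\xi = ze^{\beta B}$ from stability, identify the alternating sums with $\sigma_\Lambda^{(n)}(\xx_n) = (-1)^n\Xi_\Lambda(-z)\theta_\Lambda^{(n)}(-z;\xx_n)$ and $\sigma_\Lambda^{(0)}=\Xi_\Lambda(-z)$ via the explicit formula \req{solxpl}, and then reduce Lenard positivity to the two nonvanishing facts: $\Xi_\Lambda$ has no zeros on $B_{z_0}$ (Ruelle) and, for each fixed $\xx_n$, either $\theta_\Lambda^{(n)}(\cdot;\xx_n)\equiv 0$ or $\theta_\Lambda^{(n)}(z;\xx_n)/z^n$ has no zeros on $B_{z_0}$. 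This is precisely the decomposition used in the paper, with the second fact being its Proposition \ref{prop:nozeros} and Corollary \ref{cor:pos}.

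Where you diverge is in the proof of that dichotomy, and that is exactly the technical core of the theorem. You propose factoring
\begin{align*}
\Xi_\Lambda(z)\,\frac{\theta_\Lambda^{(n)}(z;\xx_n)}{z^n} = e^{-\beta H(\{\xx_n\})}\,\Xi_\Lambda^{\xx_n}(z),
\end{align*}
reducing the question to nonvanishing of the external-field partition function $\Xi_\Lambda^{\xx_n}$, and then hoping to apply Ruelle's logarithmic-derivative argument to it. The obstacle you flag is real: the Ruelle argument for $\Xi_\Lambda^{\xx_n}$ would require the KS solution of the \emph{frozen-field system} to be analytic on $B_{z_0}$, and the operator norm of the corresponding KS operator is controlled by the local stability of the external field $W(\{\xx_n\}\mid\cdot)$, which under mere stability of $u$ has no bound uniform in $\xx_n$ (indeed the effective activity $z\,e^{-\beta W(\{\xx_n\}\mid\{y\})}$ is not bounded by $z$). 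You recognize this gap but do not close it, and I do not see how to close it along these lines with stability alone.

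The paper's Proposition \ref{prop:nozeros} avoids introducing any auxiliary KS system. It first shows (via nonnegativity of every term in \req{solxpl}) that if $\theta_\Lambda^{(n)}(z;\xx_n)=0$ at any $z\in(0,z_0)$ then every term vanishes, forcing $e^{-\beta H(\{\xx_n,\yy_k\})}=0$ for all $k,\yy_k$ and hence $\theta_\Lambda^{(n)}(\cdot\,;\xx_n)\equiv 0$; in the complementary case it differentiates \req{solxpl} directly and obtains the identity
\begin{align*}
z^{n+1}\frac{\diff}{\diff z}\left(\frac{\theta_\Lambda^{(n)}(z;\xx_n)}{z^n}\right)
= \int_\Lambda \theta_\Lambda^{(n+1)}(z;\xx_n,x)\,\dx - \theta_\Lambda^{(n)}(z;\xx_n)\int_\Lambda\theta_\Lambda^{(1)}(z;x)\,\dx,
\end{align*}
i.e.\ the logarithmic derivative of $\theta_\Lambda^{(n)}/z^n$ is expressed through $\theta_\Lambda^{(n+1)}$, $\theta_\Lambda^{(n)}$ and $\theta_\Lambda^{(1)}$, all of which are already known to be analytic on the full disk $B_{z_0}$ as components of the Neumann series \req{fvKSsol} for the \emph{original} KS operator. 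The radius $z_0$ is therefore fixed once and for all by $\|\BPi\K\|_{\Ec\to\Ec}\leq e^{2\beta B+1}\cbeta$ and does not degrade with $\xx_n$, which is exactly the uniformity your approach is missing. In short: your reduction and your identification of the key nonvanishing lemma are both correct and agree with the paper, but your proposed proof of that lemma substitutes an external-field KS system whose radius of analyticity you cannot control, whereas the paper's Proposition \ref{prop:nozeros} reuses the analyticity of the original KS solution on $B_{z_0}$ and never needs a new operator estimate.
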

\begin{remark}
Since the correlation functions of $\KK_{\varsigma,\phi}$ satisfy Ruelle's bound for $\xi = z e^{\beta B}$ by construction, it follows that $\KK_{\varsigma,\phi}$ is tempered. 
\end{remark}\noindent\\
As previously mentioned, in computational physics the Kirkwood superposition approximation is used to approximate the correlation functions of Gibbs measures. Theorem \ref{thm:mainthm} can be used to establish an existence result of the corresponding Kirkwood closure process under some additional decay assumptions on the pair potential.
\begin{corollary}
Let $\beta,z>0$, $u\colon\R^d\to \R\cup\{+\infty\}$ be of Lennard-Jones type, and $\PP_{\beta,z,u}$ be a corresponding $(\beta,z,u)$-Gibbs measure with density $\rho$ and radial distribution function $g$. If $z$ is sufficiently small, the Kirkwood closure process $\KK_{\rho,g}$ for the pair $(\rho,g)$ exists.
\end{corollary}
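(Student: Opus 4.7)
The plan is to deduce the corollary from Theorem~\ref{thm:mainthm} applied to an \emph{effective} pair potential built from $g$. Specifically, set
\[
   \tilde u(x) \;:=\; -\beta^{-1}\log g(x),
\]
with the convention $\tilde u(x)=+\infty$ when $g(x)=0$, so that $g=e^{-\beta\tilde u}$. Then the correlation functions of $\KK_{\rho,g}$ coincide with those of $\KK_{\rho,e^{-\beta\tilde u}}$, and Theorem~\ref{thm:mainthm} applied with activity $\tilde z:=\rho$ and pair potential $\tilde u$ delivers the Kirkwood closure process, provided $\tilde u$ is stable and regular and $\rho<\tilde z_0:=(e^{2\beta\tilde B+1}C_\beta(\tilde u))^{-1}$.

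The three conditions are obtained from the convergent Mayer expansion of $\PP_{\beta,z,u}$ in the gas phase. For $z$ below some threshold the expansion yields
\[
   \rho \;=\; z\bigl(1+O(z)\bigr),\qquad g(x)\;=\;e^{-\beta u(x)}\bigl(1+z\,R_z(x)\bigr),
\]
with $R_z$ bounded uniformly in $x$ and $z$, and, for Lennard-Jones type $u$, decaying at infinity at the same polynomial rate $|x|^{-\alpha}$ as the Mayer function $\mayer$ (this last point uses that convolutions of functions with decay $|x|^{-\alpha}$, $\alpha>d$, inherit the same decay). Hence
\[
   \tilde u(x)\;=\;u(x)\;-\;\beta^{-1}\log\bigl(1+zR_z(x)\bigr),
\]
i.e.\ $\tilde u$ differs from $u$ by a bounded, $O(|x|^{-\alpha})$-decaying perturbation.

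From this representation I would verify that $\tilde u$ is again of Lennard-Jones type with constants close to those of $u$: on $\{|x|<r_0\}$ the repulsive bound $u(x)\ge c|x|^{-\alpha}$ dominates the $O(z)$ correction, so $\tilde u(x)\ge (c/2)|x|^{-\alpha}$ once $z$ is small enough; on $\{|x|\ge r_0\}$ the polynomial decay of $R_z$ gives $|\tilde u(x)|\le C'|x|^{-\alpha}$ with $C'$ close to $C$. The sufficient condition in the remark of Section~\ref{sec:setting} then shows $\tilde u$ is stable and regular, with $\tilde B$ and $C_\beta(\tilde u)$ converging to $B$ and $C_\beta(u)$ as $z\to 0$. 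In particular $\tilde z_0$ stays bounded away from $0$, while $\rho=O(z)\to 0$, so the smallness condition $\rho<\tilde z_0$ is satisfied for $z$ sufficiently small and Theorem~\ref{thm:mainthm} applies.

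The main obstacle is the quantitative control of $R_z$: beyond pointwise boundedness, which follows from standard cluster-expansion estimates in the gas regime, one really needs the polynomial decay of $g-e^{-\beta u}$ at infinity, inherited from the Lennard-Jones tail of $u$. This can be extracted either from an Ornstein-Zernike type analysis of the truncated pair correlation, or directly by controlling the Kirkwood-Salsburg iterates in a weighted $L^\infty$ norm that encodes the $|x|^{-\alpha}$ decay; everything else reduces to a routine comparison of stability and regularity constants of $u$ and $\tilde u$ under the resulting small $L^1\cap L^\infty$ perturbation.
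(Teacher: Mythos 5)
Your overall strategy is the same as the paper's: reduce to Theorem~\ref{thm:mainthm} by viewing $g$ as the Boltzmann factor of an \emph{effective} pair potential $\tilde u=-\beta^{-1}\log g$, verify that this effective potential is again of Lennard-Jones type (hence stable and regular), and combine this with the fact that $\rho\to 0$ as $z\to 0$ to satisfy the smallness condition $\rho<\tilde z_0$. So the reduction step, the choice of effective potential, and the role of the smallness of $z$ all match.

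The difference is that the paper does not re-derive the Lennard-Jones structure of the effective potential; it simply cites Hanke's result in \cite{Hanke18c}, which proves precisely that for $z$ small enough there is a Lennard-Jones type potential $v$ with $g=e^{-v}$. You instead sketch a self-contained derivation from the Mayer/Kirkwood--Salsburg expansion, and you explicitly flag the crux yourself: establishing the \emph{pointwise polynomial decay} of the remainder $R_z(x)=(g(x)e^{\beta u(x)}-1)/z$ at the same rate $|x|^{-\alpha}$ as the Lennard-Jones tail. The uniform $L^\infty$ bound on $R_z$ in the gas phase is indeed standard, but upgrading it to a weighted $|x|^{-\alpha}$ bound is genuinely the content of the cited reference (it requires tracking decay through the cluster or Kirkwood--Salsburg iteration in a weighted norm, as you say), and you only indicate two possible routes rather than carrying one out. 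As written, your proposal therefore has a gap exactly at the point the paper outsources to the literature; the remaining steps --- that $\rho=z(1+O(z))$ is small, that boundedness of $R_z$ controls $\tilde u$ on $\{|x|<r_0\}$, that the decay of $R_z$ gives the tail bound on $\{|x|\ge r_0\}$, and that $\tilde z_0$ stays bounded away from $0$ while $\rho\to 0$ --- are correct and essentially what the paper's one-line argument compresses. If you cite \cite{Hanke18c} (or actually prove the weighted decay estimate) for the missing step, your argument coincides with the paper's.
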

\begin{proof}
It is well-known, cf.~\cite{Ruelle69}, that $\rho=\rho(z)$ is a decreasing function of $z$. Furthermore, for $z$ sufficiently small it was shown in \cite{Hanke18c} that there exists a Lennard-Jones type potential $v$ such that $g=e^{- v}$. Therefore, if $z$ is small enough, so is $\rho$ and the Kirkwood closure for $(\rho,g)$ exists by Theorem \ref{thm:mainthm}.
\end{proof}\\ \noindent

\begin{proposition}\label{prop:pos}
For any $x_1,\dots ,x_n\in\Lambda$ and any $z\in (0,z_0)$ there holds
\begin{align}\label{eq:altsign}
    (-1)^n \theta_\Lambda^{(n)}(-z;\xx_n) \geq 0.
\end{align}
\end{proposition}
The idea of the proof is to approximate the potential $u$ by an appropriate potential $u_\delta$ that is locally stable and show that the corresponding solutions of \req{fvKS} converge in the weak$*$ topology for $\delta\to 0$. 
\begin{proof}
For a given $z\in (0,z_0)$ choose $\delta>0$ such that $z\in (0,z_\delta)$ where
\begin{align*}
	z_\delta := \left( e^{2\beta B+1}\exp\left(\delta/\cbeta\right)\cbeta) \right)^{-1} 
\end{align*}
and define 
\begin{align}\label{eq:udelta}
		u_{\delta}: = u + \infty\cdot \mathds{1}_{|x|< r_0}.
\end{align}
Here $r_0=r_0(\delta)>0$ is chosen such that
\begin{align*}
	\int_{\R^d} \left|\mayer^{\delta}(x)\right|\dx \leq \cbeta+\delta
\end{align*}
where $\mayer^{\delta}: = e^{-\beta u_\delta(\cdot )}-1$.
In particular, since $u_\delta \geq u$ one can use the same stability constant for $u_\delta$ as for $u$. Furthermore, if \req{locstab} holds for $u$ it also holds for $u_\delta$ and thus the definition of $\PI$ does not need to be changed as one can use the index $i_*$ defined by $u$ for every $u_\delta$. One can now define $\K_\delta$ as in \req{Kdef1} and \req{Kdef2} with $u_\delta$ in place of $u$ and gets that the corresponding version of \req{fvKS} has a unique solution $\bt_{\Lambda,\delta}$ for $|z|<z_\delta$ since 
$\|\BPi\K_\delta\|_{\Ec\to \Ec} \leq e^{2\beta B+1}\exp\left(\delta/\cbeta\right)\cbeta$. Since every $u_\delta$ is locally stable it follows from Corollary \ref{corr:pos} that for every $n$ and $x_1,\dots ,x_n\in \Lambda$ there holds
\begin{align}\label{eq:altsigndelta}
	 \textnormal{sgn}(z)^n \theta_{\Lambda,\delta}^{(n)}(z;\xx_n) >0 
    \quad\text{ for all } z \in \Bz\cap\R\backslash\{0\}.
\end{align}
%
From \req{fvKSsol} it follows that
\begin{align}\label{eq:KSdelta}
	\|\bt_{\Lambda,\delta}\|_{\cbeta} \leq 
	|z|\cbeta 
	\sum_{k=0}^\infty |z|^k\|\bx\BPi\K_\delta\|_{\Ec\to \Ec}^k
	=\frac{z\cbeta }{1-|z| e^{2\beta B+1}\exp\left(\delta/\cbeta\right)\cbeta }
\end{align}
and it follows that the sequence $(\bt_{\Lambda,\delta})_{\delta>0}$ has a subsequence for $\delta\to 0$ such that for every $n\geq 0$ and $F_{n+1}\in L^1((\R^d)^{n+1})$ there holds
\begin{align}\label{eq:deltaconv}
	\lim_{\delta\to 0}
	\int_{(\R^d)^{n+1}}
	F_{n+1}(x,\xx_n)\theta_{\Lambda,\delta}^{(n+1)}(x,\xx_n)\diff(x,\xx_n)
	=\int_{(\R^d)^n}
	F_{n+1}(x,\xx_n)\theta_{\Lambda,*}^{(n+1)}(x,\xx_n)\diff(x,\xx_n)
\end{align} 
for some $\bt_{\Lambda,*}=(\theta^{(n)}_{\Lambda,*})_{n\geq 1}$. Since $\bt_{\Lambda,\delta}$ satisfies the Kirkwood-Salsburg equations for the modified potential $u_\delta$ one can conclude that
\begin{align*}
&\int_{(\R^d)^{n+1}}
	F_{n+1}(x,\xx_n)\theta_{\Lambda,\delta}^{(n+1)}(x,\xx_n)\diff(x,\xx_n) \\
	&=\int_{(\R^d)^n}
	F_{n+1}(x,\xx_n)z
    e^{-\beta W_\delta(\{x \}\mid \{\xx_{n}\})}
    \left(
    \theta_{\Lambda,\delta}^{(n)}(\xx_{n})+
    \sum_{k=1}^\infty\frac{(-1)^k}{k!}\int_{(\R^d)^k}
    \prod_{j=1}^k \mayer^{\delta}(x-y_j)
    \theta_{\Lambda,\delta}^{(n+k)}(z;\xx_{n},\yy_k)\dyy_k
    \right).
\end{align*}
again with the convention $\theta_{\Lambda,\delta}^{(0)}:=1$.
Define 
\begin{align}\label{eq:ftilde}
	\widetilde{F}^{\delta}_{n+1+k}(x,\xx_n,\yy_k):=
	F_{n+1}(x,\xx_n)z
    e^{-\beta W_\delta(\{x \}\mid \{\xx_{n}\})}
    \prod_{j=1}^k \mayer^{\delta}(x-y_j)
\end{align}
then there holds 
\begin{align*}
	\lim_{\delta \to 0}
	\widetilde{F}^{\delta}_{n+1+k}(x,\xx_n,\yy_k)
	=F_{n+1}(x,\xx_n)z
    e^{-\beta W(\{x \}\mid \{\xx_{n}\})}
    \prod_{j=1}^k \mayer(x-y_j)
\end{align*}
almost everywhere and furthermore by \req{locstab} for $\delta$ small enough there holds
\begin{align}
	\left|
	F_{n+1}(x,\xx_n)z
    e^{-\beta W_\delta(\{x \}\mid \{\xx_{n}\})}
    \prod_{j=1}^k \mayer^{\delta}(x-y_j)
	\right|\leq
	\left|
	F_{n+1}(x,\xx_n)z
    e^{2\beta B}
    \prod_{j=1}^k \left(\mathds{1}_{|x-y_j|< 1}+\mayer(x-y_j)\right)
	\right|.
\end{align}
Using dominated convergence it can thus be concluded that 
\begin{align}
	\lim_{\delta \to 0}
	\int_{(\R^d)^{n+1+k}}\left|
	\widetilde{F}^{\delta}_{n+1+k}(x,\xx_n,\yy_k)
	-F_{n+1}(x,\xx_n)z
    e^{-\beta W(\{x \}\mid \{\xx_{n}\})}
    \prod_{j=1}^k \mayer(x-y_j)
	\right|\diff(x,\xx_n,\yy_k) =0.
\end{align}
The $L^1$ convergence of the $\widetilde{F}^{\delta}_{n+1+k}$ together with \req{deltaconv} then shows that the limit $\bt_{\Lambda,*}$ satisfies the Kirkwood-Salsburg equations for the original potential $u$ and satisfies \req{altsign} since every $\bt_{\Lambda,\delta} $ satisfies \req{altsigndelta}. Since this solution is unique it follows $\bt_{\Lambda,*}=\bt_{\Lambda}$ and the proposition is proved.
\end{proof}

\begin{proofthm}[\ref{thm:mainthm}]
Let $u\colon \R^d\to \R\cup \{+\infty\}$ be a regular and stable pair potential and $z \in (0,z_0)$. Let $(\rho^{(n)})_{n\geq 1}$ be defined by \req{easycorrelations} with $\varsigma = z$ and $\phi = e^{-\beta u}$. Then the functions $(\rho^{(n)})_{n\geq 1}$ satisfy Ruelle's bound \req{Rbound} with $\xi= z e^{\beta B}$. It remains to show that the inequalities \req{lenpos1} and \req{lenpos2} are satisfied for every bounded $\Lambda\subset \R^d$. For \req{lenpos2} this follow immediately as
\begin{align*}
    1+\sum_{k=1}^\infty\frac{(-1)^k}{k!} \int_{\Lambda^k}\rho^{(k)}(\yy_k)\dyy_k
    =\Xi_\Lambda (-z)
\end{align*}
and $\Xz$ has no zeros in $\Bz$. Finally, let $\sigma_\Lambda^{(0)}(z):= \Xi_\Lambda(-z)$ and for $n\geq 1$
\begin{align*}
    \sigma_\Lambda^{(n)}(z;\xx_n):= (-1)^n \Xi_\Lambda(-z) \theta_\Lambda^{(n)}(-z;\xx_n)
\end{align*}
where $(\theta_\Lambda^{(n)}(-z;\cdot))_{n\geq 1}$ is the solution of \req{fvKS} for $-z$. By Proposition \ref{prop:pos} $\sigma_\Lambda^{(n)}(z;\xx_n) \geq 0$ for all $\xx_n\in\Lambda^n$ and since
\begin{align*}
    \sigma_\Lambda^{(n)}(z;\xx_n) = 
    \sum_{k=0}^\infty\frac{(-1)^k}{k!} \int_{\Lambda^k}\rho^{(n+k)}(\xx_n,\yy_k)\dyy_k
\end{align*}
by virtue of \req{solxpl}, the theorem is proved. 
\end{proofthm}

\begin{remark}
Theorem \ref{thm:mainthm} can also be extended to the case that the pair potential is not translationally invariant, i.e.~$u\colon (\R^d)^2\to \R\cup \{+\infty\}$. The proof works the same way, however, some additional technical assumptions on $u$ need to be made, cf.~\cite{KunaPhD}
\end{remark}

\section{The Kirkwood closure process is a  Gibbs point process}\label{sec:GNZresult}
In this section it will be shown that for locally stable $u$ the Papangelou kernel of the Kirkwood closure process $\KK_{\varsigma,\phi}$ for $\varsigma=z$ and $\phi=e^{-\beta u}$ solves a modified Kirkwood-Salsburg equation.
In particular, it is shown that $\KK_{\varsigma,\phi}$ is a Gibbs point process for the Hamiltonian defined in \req{kirkwoodH}. 
For finite configurations the interaction $W_\KK$ associated to $H_\KK$ is characterized by \req{intprop}, i.e.
\begin{align}\label{eq:WKprop}
    H_\KK(z;\eta\cup \gamma) = 
    H_\KK(z;\gamma)+
    W_\KK(z;\gamma \mid \eta)+
    H_\KK(z;\eta)
\end{align}
for $\eta,\gamma \in \Gamma_0$. Using \req{WKprop} and \req{kirkwoodH} it can be concluded that for $\xx_n\in (\R^d)^n$ and $\eta \in \Gamma_0$ there holds
\begin{align}\label{eq:wasiota}
    \frac{\iota^{(N(\eta)+n)}(z;\xx_n,\eta)}{\iota^{(N(\eta))}(z;\eta)}
    = e^{- H_\KK(z;\{\xx_n\})- 
    W_\KK(z;\{\xx_n\} \mid \eta)}
\end{align}
where by abuse of notation $\eta $ in the argument of $\iota^{(n+N(\eta))}$ (respectively $\iota^{(N(\eta))}$) denotes the vector containing the points of $\eta.$ This fraction is well-defined by Corollary \ref{corr:pos}.
Thus \req{wasiota} can be used to define a Papangelou kernel $\kappa$ (analogous to \req{papa}) of the Kirkwood closure process for finite configurations as
\begin{align*}
    \kappa^{(n)}(z;\xx_n;\eta):=\frac{\iota^{(n+N(\eta))}(z;\xx_n,\eta)}{\iota^{(N(\eta))}(z;\eta)}.
\end{align*}
However, since the Kirkwood closure process is translationally invariant there holds $\KK_{\varsigma,\phi}(N_{\R^d}(\eta)<+\infty)=0$ and thus the \glqq typical\grqq{} $\eta$ will have infinitely many points and a way to define the interaction $W_\KK$ (and thus the kernel $\kappa$) for infinite $\eta$ is needed. Defining $\vartheta^{(n)}(z;\xx_n;\eta):=(-1)^n\kappa^{(n)}(-z;\xx_n;\eta)$ and plugging \req{wasiota} into \req{ksns} one finds that
\begin{align}\label{eq:gnzkirkwoodinf}
    \vartheta^{(n)} (z; \xx_{n};\eta)
    = ze^{-\beta W(x_1\mid \{\xx_{2,n}\}\cup \eta)}  
    \sum_{k=0}^\infty 
    \frac{1}{k!}
    \int_{(\R^d)^k} \prod_{i=1}^k f_\beta(x_1-y_i)
    \vartheta^{(n+k-1)} (z; \xx_{2,n},\yy_k;\eta)\dyy_k
\end{align}
with the convention $\vartheta_\Lambda^{(0)}(\eta)\equiv 1$. To gain control of the limit $\Lambda\nearrow\R^d$ of interactions restricted to a bounded set $\Lambda$ an additional assumption is made.\\
A pair potential $u$ is called \emph{lower regular}, i.e.~there exists a decreasing function $\psi\colon [0,+\infty)\to [0,+\infty)$ with 
    \begin{align*}
        \int_0^\infty \psi(r)r^{d-1}\dr < +\infty
    \end{align*}
    and for all $x\in\R^d$
    \begin{align*}
        u(x) \geq - \psi(|x|) .
    \end{align*}
For stable and lower regular pair potentials $u$ and $W$ defined by \req{winteraction} it is known that for any tempered point process $\PP$ and $\eta\in\Gamma_0$ there holds 
\begin{align}\label{eq:wcont}
    W(\eta\mid \gamma) = \lim_{\Lambda\nearrow\R^d}W(\eta\mid \gamma_\Lambda) \in \R\cup\{+\infty\}, 
\end{align}
for $\PP$-almost all $\gamma\in\Gamma,$ see \cite{KunaPhD}.
Since the Kirkwood closure process is tempered by Theorem \ref{thm:mainthm} equation \req{gnzkirkwoodinf} is well-defined for $ \lebesgue^n\times\KK_{\rho,g}$-almost all $(\xx_n,\eta)\in (\R^d)^n\times \Gamma$ and every $n\geq 1$ and can be used to define $\kappa^{(n)}$, it remains to show that $(\kappa^{(n)})_{n\geq 1}$ is indeed the Papangelou kernel of the Kirkwood-closure process.
\begin{theorem}\label{thm:thm3}
Let $\beta>0,$ $z\in(0,z_0)$ and $u\colon\R^d \to \R\cup\{+\infty\}$ be locally stable, regular and lower regular, and let $\KK_{\varsigma,\phi}$ be the Kirkwood closure process for $\varsigma=z$ and $\phi=e^{-\beta u}$, then for any $n\geq 1$ and any nonnegative function $F\colon (\R^d)^n\times \Gamma\to [0,+\infty)$ there holds
\begin{align}\label{eq:KirkwoodGNZ}
    \int_\Gamma 
    \sum_{x_1,\dots,x_n\in\eta \atop x_i \neq x_j }
    F(\xx_n;\eta) 
    \dK_{\varsigma,\phi}(\eta)
    =\int_{(\R^d)^n} \int_\Gamma 
    F(\xx_n;\eta\cup\{\xx_n\})
    \kappa^{(n)}(z;\xx_n;\eta) \dK_{\varsigma,\phi}(\eta) \dxx_n
\end{align}
where $(-1)^n\kappa^{(n)}(-z;\,\cdot\,;\,\cdot\,)\colon  (\R^d)^n\times \Gamma\to [0,+\infty)$ solves \req{gnzkirkwoodinf}.
In particular, $\KK_{\varsigma,\phi}$ satisfies the multivariate GNZ-equation and is thus a $H_\KK$-Gibbs measure for $H_\KK$ given by \req{kirkwoodH}.
\end{theorem}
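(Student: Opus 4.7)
The plan is to derive \req{KirkwoodGNZ} by passing to the limit $\Lambda\nearrow\R^d$ in the finite-volume GNZ-identity \req{finvolGNZ} of the preceding lemma. First I rewrite the Janossy quotient appearing in \req{finvolGNZ}: by \req{jask} it equals
\begin{align*}
	\frac{j^{(N_\Lambda(\eta)+n)}_\Lambda(z;\xx_n,\eta_\Lambda)}{\jan{N_\Lambda(\eta)}(z;\eta_\Lambda)}
	=(-1)^n\,\frac{\theta^{(N_\Lambda(\eta)+n)}_\Lambda(-z;\xx_n,\eta_\Lambda)}{\theta^{(N_\Lambda(\eta))}_\Lambda(-z;\eta_\Lambda)},
\end{align*}
and by Proposition \ref{prop:fvKNS} this ratio is exactly $(-1)^n\kappa^{(n)}_\Lambda(-z;\xx_n;\eta_\Lambda)$. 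Hence \req{finvolGNZ} becomes a finite-volume version of \req{KirkwoodGNZ} with kernel $(-1)^n\kappa^{(n)}_\Lambda(-z;\,\cdot\,;\,\cdot\,)$.

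By a standard monotone-class approximation it suffices to verify \req{KirkwoodGNZ} for \emph{local} nonnegative $F$, that is, $F$ vanishing unless $\xx_n\in\Delta^n$ and satisfying $F(\xx_n;\eta)=F(\xx_n;\eta_\Delta)$ for some bounded $\Delta\subset\R^d$. For any $\Lambda\supset\Delta$, the left-hand side of the reformulated \req{finvolGNZ} is $\Lambda$-independent and coincides with the left-hand side of \req{KirkwoodGNZ}; in the right-hand side the replacement $\eta_\Lambda\leadsto\eta$ inside $F$ is likewise immaterial, since $F$ only sees $\eta_\Delta$. It then remains to pass the limit under the integral on the right, which I would do by dominated convergence: estimate \req{unifbdoundk} (more precisely its pointwise form \req{kappabd}) yields the $\Lambda$-uniform bound $|\kappa^{(n)}_\Lambda(-z;\,\cdot\,;\,\cdot\,)|\leq(|z|e^{2\beta B})^n$, so the integrand is dominated by $\|F\|_\infty(|z|e^{2\beta B})^n\mathds{1}_{\Delta^n}$, which has finite integral against $\lebesgue^n\otimes\KK_{\varsigma,\phi}$.

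The principal obstacle is the pointwise convergence
\begin{align*}
	\lim_{\Lambda\nearrow\R^d}\kappa^{(n)}_\Lambda(-z;\xx_n;\eta_\Lambda)=\kappa^{(n)}(-z;\xx_n;\eta),
\end{align*}
which by Ruelle's bound must be established for $\KK_{\varsigma,\phi}$-a.e.\ $\eta\in\Gamma_*$; since typical $\eta$ are infinite, the explicit formula \req{solKNS}, stated only for finite configurations, does not apply directly. I would expand both $\kappa^{(n)}_\Lambda$ and $\kappa^{(n)}$ as Neumann series of $z\bx\K_\Gamma$ and $z\K_\Gamma$ applied to $\bx\al$ and $\al$ respectively, and prove termwise convergence. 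Each term is a finite multiple Mayer integral weighted by exponentials of the form $e^{-\beta W(\{x\}\mid\eta_\Lambda\cup\cdots)}$ and iterated $\theta$-factors: for $\eta\in\Gamma_*$, Remark \ref{rem:Wcont} (which is where lower regularity enters) provides $W(\{x\}\mid\eta_\Lambda\cup\{\xx_{2,n}\})\to W(\{x\}\mid\eta\cup\{\xx_{2,n}\})$, while \req{unifcpts} handles the $\theta$-factors on any bounded set of their variables. The local stability of $u$ then yields the operator bound $\|z\K_\Gamma\|_{E^\infty_\Gamma\to E^\infty_\Gamma}\leq e^{2\beta B+1}\cbeta|z|<1$, uniformly in $\Lambda$, which legitimizes the exchange of limit and summation in the Neumann series. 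Combined with the dominated convergence argument above, this yields \req{KirkwoodGNZ}.

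Finally, the Gibbsianness of $\KK_{\varsigma,\phi}$ is an immediate re-reading: from \req{solKNS} and \req{wasiota}, for finite $\eta$ one has $(-1)^n\kappa^{(n)}(-z;\xx_n;\eta)=\exp\bigl(-H_\KK(z;\xx_n)-W_\KK(z;\{\xx_n\}\mid\eta)\bigr)$, so the kernel in \req{KirkwoodGNZ} takes the Papangelou form \req{papa} for the Hamiltonian $H_\KK$ of \req{kirkwoodH}; an extension of $W_\KK$ to $\KK_{\varsigma,\phi}$-a.e.\ infinite $\eta$ is furnished by the limit $W_\KK(z;\{\xx_n\}\mid\eta)=\lim_{\Lambda\nearrow\R^d}W_\KK(z;\{\xx_n\}\mid\eta_\Lambda)$, whose existence is precisely the content of the pointwise convergence established above. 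This identifies $\KK_{\varsigma,\phi}$ as an $H_\KK$-Gibbs measure.
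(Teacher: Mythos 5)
Your proof takes a genuinely different route from the paper's. The paper establishes \req{KirkwoodGNZ} by a weak-$*$ compactness argument: from the uniform bound \req{unifbdoundk} it extracts a weak-$*$ convergent subsequence of $\btt_\Lambda(z)$ in $E^\infty_\Gamma$ (tested against $E^1_\Gamma$), identifies the limit as the unique solution $\btt(z)$ of \req{KNS} by passing to the limit in \req{fvKNS} under the integral (using Remark \ref{rem:Wcont} and local stability to control the convergence of the right-hand sides), and then invokes this weak-$*$ convergence directly in \req{finvolGNZ}. You instead aim at \emph{pointwise} a.e.\ convergence $\kappa^{(n)}_\Lambda(-z;\xx_n;\eta_\Lambda)\to\kappa^{(n)}(-z;\xx_n;\eta)$ and conclude by dominated convergence after a monotone-class reduction to local $F$. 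Both strategies are viable; the paper's buys a cleaner argument that never needs to exhibit a.e.\ limits of the kernels, while yours is more explicit but shifts all the difficulty into the termwise Neumann-series convergence.

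On that step there is a confusion worth flagging. You describe the terms of the Neumann series $\sum_k (z\bx\K_\Gamma)^k z\bx\al$ as containing ``iterated $\theta$-factors'' handled by \req{unifcpts}. But these terms contain no $\theta$'s at all: iterating $\K_\Gamma$ applied to $\al$ produces explicit multiple Mayer integrals weighted by exponentials $e^{-\beta W(\cdot\mid\eta_\Lambda\cup\cdots)}$, and \req{unifcpts} is not the right tool here. The object that \emph{does} carry $\theta$-factors is the explicit representation \req{solfvKNS}, where $\kappa^{(n)}_\Lambda$ is a ratio $\theta^{(N_\Lambda(\eta)+n)}_\Lambda/\theta^{(N_\Lambda(\eta))}_\Lambda$; but there \req{unifcpts} does \emph{not} apply, since it is a statement for fixed $n$ while $N_\Lambda(\eta)\to\infty$ along the exhaustion. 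So either you work purely with the Neumann series (no $\theta$'s, and the termwise convergence must be carried through the multiple Mayer integrals and exponentials by dominated convergence plus Remark \ref{rem:Wcont}, with the geometric operator bound $\|z\K_\Gamma\|\le e^{2\beta B+1}\cbeta|z|<1$ justifying the interchange of limit and sum), or you work with \req{solfvKNS} and then you need a separate argument for the ratio of $\theta$'s with diverging superscripts. As written, your sketch conflates the two, and this is a gap you should close; the remaining ingredients (monotone-class reduction to local $F$, the observation that the left-hand side of \req{finvolGNZ} stabilizes once $\Lambda\supset\Delta$, the dominated-convergence bound from \req{kappabd}, and the Gibbsian re-reading via \req{wasiota}) are sound.
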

\begin{remark}
Since the Janossy densities of the Kirkwood closure process are given by \req{jask} it follows from \req{unifcpts} that for all compact $\Delta\subset \R^d$ there holds
\begin{align*}
    \lim_{\Lambda \nearrow\R^d} \sup_{\xx_n\in\Delta^n}
    \left|
    \frac{j_\Lambda^{(n)}(z;\xx_n)}{j_\Lambda^{(0)}(z)}
    -e^{-H_\KK(\{\xx_n\})}\right| = 0.
\end{align*}
\end{remark}
In light of this, one can first look at the restriction of the Kirkwood closure process $\KK_{\varsigma,\phi}$ to a finite volume. However, since this convergence only holds for finite configurations one needs to be careful when taking the limit.

\begin{lemma}
Let $\beta>0$, $z\in(0,z_0)$, $u\colon\R^d \to \R\cup\{+\infty\}$ be a stable and regular pair potential, and $\KK_{\varsigma,\phi}$ be the Kirkwood closure process for $\varsigma=z$ and $\phi=e^{-\beta u}$. Then, for any nonnegative function $F\colon (\R^d)^n\times \Gamma\to [0,+\infty)$ and any bounded set $\Lambda\subset \R^d$ there holds 
\begin{align}\label{eq:finvolGNZ}
    \int_\Gamma 
    \sum_{x_1,\dots,x_n\in\eta_\Lambda\atop x_i\neq x_j}
    F(\xx_n;\eta_\Lambda) 
    \dK_{\varsigma,\phi}(\eta)
    =\int_{\Lambda^n} \int_\Gamma 
    F(\xx_n;\eta_\Lambda\cup\{\xx_n\})
    \frac{j_\Lambda^{(N_\Lambda(\eta)+n)}(z;\xx_n,\eta_\Lambda)}
    {\jan{N_\Lambda(\eta)}(z;\eta_\Lambda)} \dK_{\varsigma,\phi}(\eta) \dxx_n.
\end{align}
Here by abuse of notation $\eta_\Lambda$ in the argument of $\jan{N_\Lambda+n}$ (respectively $\jan{N_\Lambda}$) denotes the vector containing the points of $\eta_\Lambda$.
\end{lemma}

\begin{proof}
Let $F\colon (\R^d)^n\times \Gamma\to [0,+\infty)$, then by the defining property of the Janossy densities of the Kirkwood closure process there holds
\begin{align*}
    \int_\Gamma 
    \sum_{x_1,\dots,x_n\in\eta_\Lambda\atop x_i\neq x_j}
    F(\xx_n;\eta_\Lambda) 
    \dK_{\varsigma,\phi}(\eta)
    &=\sum_{k=0}^\infty \frac{1}{k!}
    \int_{\Lambda^k} 
    \sum_{x_1,\dots,x_n\in\yy_k \atop x_i\neq x_j}F(\xx_n;\{\yy_k\}) 
    \jan{k}(z;\yy_k) \dyy_k \\
    &=\sum_{k=n}^\infty \frac{n!}{k!} 
    \sum_{1\leq i_1<\dots < i_n\leq k}
    \int_{\Lambda^k}F(y_{i_1},\dots,y_{i_n};\{\yy_k\}) 
    \jan{k}(z;\yy_k) \dyy_k.
\end{align*}
Easy calculation gives
\begin{align*}
    &\sum_{k=n}^\infty \frac{n!}{k!}
    \sum_{1\leq i_1<\dots < i_n\leq k}
    \int_{\Lambda^k}F(y_{i_1},\dots,y_{i_n};\{\yy_k\}) 
    \jan{k}(z;\yy_k) \dyy_k \\
    &=\sum_{k=n}^\infty \frac{k(k-1)\dots (k-n+1)}{k!}
    \int_{\Lambda^k} 
    F(\yy_n;\{\yy_k\}) 
    \jan{k}(z;\yy_k) \dyy_k \\
    &=\int_{\Lambda^n}\sum_{k=0}^\infty \frac{1}{k!}
    \int_{\Lambda^k} 
    F(\xx_n;\{\yy_k\}\cup\{\xx_n\}) 
    \jan{k+n}(z;\xx_n,\yy_k) \dyy_k \dxx_n.
\end{align*}
By Remark \ref{rem:hereditary} the fraction $\jan{k+n}(\xx_n,\yy_k)/\jan{k}(\yy_k)$ is well-defined and thus
\begin{align*}
    &\int_{\Lambda^n}\sum_{k=0}^\infty \frac{1}{k!}
    \int_{\Lambda^k} 
    F(\xx_n;\{\yy_k\}\cup\{\xx_n\}) 
    \jan{k+n}(z;\xx_n,\yy_k) \dyy_k \dxx_n \\
    =\!&\int_{\Lambda^n}\sum_{k=0}^\infty \frac{1}{k!}
    \int_{\Lambda^k} 
    F(\xx_n;\{\yy_k\}\cup\{\xx_n\}) 
    \frac{\jan{k+n}(z;\xx_n,\yy_k)}{\jan{k}(z;\yy_k)}\jan{k}(z;\yy_k) \dyy_k \dxx_n.
\end{align*}
For every $\xx_n\in \Lambda^n$ define
\begin{align*}
    \tilde{F}(\eta ) :=
    F(\xx_n;\eta_\Lambda \cup\{\xx_n\}) 
    \frac{\jan{k+n}(z;\xx_n,\eta_\Lambda)}{\jan{k}(z;\eta_\Lambda)}
\end{align*}
then $\tilde{F}(\eta )=\tilde{F}(\eta_\Lambda )$, i.e. $\tilde{F}$ is local, and thus by the definition of the Janossy densities it follows that 
\begin{align*}
    &\phantom{\,=\,}\int_{\Lambda^n}\sum_{k=0}^\infty \frac{1}{k!}
    \int_{\Lambda^k} 
    F(\xx_n;\{\yy_k\}\cup\{\xx_n\}) 
    \frac{\jan{k+n}(z;\xx_n,\yy_k)}{\jan{k}(z;\yy_k)}\jan{k}(z;\yy_k) \dyy_k \dxx_n
    \\
    &=\int_{\Lambda^n} \int_\Gamma 
    F(\xx_n;\eta_\Lambda\cup\{\xx_n\})
    \frac{j_\Lambda^{(N_\Lambda(\eta)+n)}(z;\xx_n,\eta_\Lambda)}
    {\jan{N_\Lambda(\eta)}(z;\eta_\Lambda)} \dK_{\varsigma,\phi}(\eta) \dxx_n
\end{align*}
which is the right-hand side of \req{finvolGNZ}.
\end{proof}\\ \noindent
\begin{proofthm}[\ref{thm:thm3}]
Let now $\nu = \KK_{\varsigma,\phi}$ and $\al$ be the vector in $E_{\cbeta,\nu}^\infty$ defined by $\al^{(1)} (x;\eta)=e^{-\beta W(\{x \}\mid \eta)} $ and $\al^{(n)}\equiv 0$ for all $n\geq 2$.
Then for any $\Lambda$ \req{fvKNS} has a unique solution that in light of \req{jask} is given by
\begin{align}\label{eq:fvsolvartheta}
    \vartheta_\Lambda^{(n)}(z;\xx_n; \eta ) = 
    (-1)^n\frac{j_\Lambda^{(N_\Lambda(\eta)+n)}(-z;\xx_n,\eta_\Lambda)}
    {\jan{N_\Lambda(\eta)}(-z;\eta_\Lambda)}.
\end{align}
Note that $ \vartheta^{(n)}_\Lambda(z;\xx_n;\eta_\Lambda)=\vartheta^{(n)}_\Lambda(z;\xx_n;\eta)$ meaning it is a local function.
Since $\btt_\Lambda(z)$ can be written as a Neumann-series there holds
\begin{align*}
    \|\boldsymbol{\vartheta}_\Lambda\|_{\infty,\KK_{\varsigma,\phi} } 
    \leq \sum_{l=0}^\infty|z|\cbeta e^{2\beta B+1} \|z\al\|
    \leq 
    \cbeta  \frac{|z|e^{2\beta B}}{1- |z|\cbeta e^{2\beta B+1}} <+\infty.
\end{align*}
Since this bound is independent of $\Lambda$ one can choose a diagonal subsequence and find some $\tilde{\btt}\in E_{\nu}^\infty$ such that 
\begin{align*}
    \lim_{\Lambda\nearrow \R^d}
    &\sum_{n=1}\frac{1}{n!}\int_{(\R^d)^n}\int_{\Gamma}
    F^{(n)}(\xx_n;\eta ) \vartheta^{(n)}_\Lambda(z;\xx_n;\eta )\dK_{\varsigma,\phi} \dxx_n \\
    =
    &\sum_{n=1}\frac{1}{n!}\int_{(\R^d)^n}\int_{\Gamma}
    F^{(n)}(\xx_n;\eta) \tilde{\vartheta}^{(n)}(z;\xx_n;\eta)\dK_{\varsigma,\phi} \dxx_n
\end{align*}
for each $\boldsymbol{F}=(F^{(n)})_{n\geq 1}\in E_{\nu}^1$. By \req{fvKNS} one finds that 
\begin{align*}
    \sum_{n=1}\frac{1}{n!}&\int_{(\R^d)^n}\int_{\Gamma}
    F^{(n)}(\xx_n;\eta) \vartheta^{(n)}_\Lambda(z;\xx_{n};\eta)\dK_{\varsigma,\phi} \dxx_n \\
    =
    \sum_{n=1}\frac{1}{n!}&\int_{(\R^d)^n}\int_{\Gamma}
    F^{(n)}(\xx_n;\eta)
    z
    \mathds{1}_{\Lambda^n}(\xx_n)e^{-\beta W(\{x_1\} \mid \eta_\Lambda\cup \{\xx_{2,n}\})} \cdot \\
    &\sum_{l=0}^\infty\frac{1}{l!}
    \int_{(\R^d)^l}\prod_{j=1}^l f_\beta(x_1-y_j)
    \vartheta^{(n-1+l)}_\Lambda(z;\xx_{2,n},\yy_l;\eta)\yy_l\dK_{\varsigma,\phi} \dxx_n
\end{align*}
where again $\vartheta_\Lambda^{(0)}(\eta)\equiv 1$. 
Note that by \req{wcont} there holds
\begin{align*}
    \mathds{1}_{\Lambda^n}(\xx_n)
    e^{-\beta W(\{x_1\} \mid \eta_\Lambda\cup \{\xx_{2,n}\})}
    \to 
    e^{-\beta W(\{x_1\} \mid \eta\cup \{\xx_{2,n}\})}
\end{align*}
pointwise $\lebesgue^n\times \KK_{\rho,g}$-almost everywhere as $\Lambda\nearrow \R^d$, furthermore by \req{localstab2} there holds 
\begin{align*}
    \max \left\lbrace 
    \mathds{1}_{\Lambda^n}(\xx_n)e^{-\beta W(\{x_1\} \mid \eta_\Lambda\cup \{\xx_{2,n}\})},
    e^{-\beta W(\{x_1\} \mid \eta\cup \{\xx_{2,n}\})}
    \right\rbrace 
    \leq e^{2\beta B}
\end{align*} 
and thus by dominated convergence it can be concluded that
\begin{align*}
    \sum_{l=0}^\infty\frac{\cbeta^{-l}}{l!}
    \int_{(\R^d)^{l+n}}
    &\int_\Gamma\left|\left(\mathds{1}_{\Lambda^n}(\xx_n)
    e^{-\beta W(\{x_1\} \mid \eta_\Lambda \cup \{\xx_{2,n}\})} -
    e^{-\beta W(\{x_1\} \mid \eta\cup \{\xx_{2,n}\})}\right)\right.\\
    &\left.F^{(n)}(\xx_n;\eta )\prod_{j=1}^l f_\beta(x_1-y_j)\right| \dyy_l\dxx_n \dK_{\varsigma,\phi}
    \to 0
    \qquad \text{ as }\Lambda\nearrow \R^d.
\end{align*}
It follows that
\begin{align*}
    \sum_{n=1}\frac{1}{n!}&\int_{(\R^d)^n}\int_{\Gamma}
    F^{(n)}(\xx_n;\eta)
   \tilde{\vartheta}^{(n)}(z;\xx_{n};\eta)\dK_{\varsigma,\phi} \dxx_n \\
    =\sum_{n=1}\frac{1}{n!}&\int_{(\R^d)^n}\int_{\Gamma}
    F^{(n)}(\xx_n;\eta)
    z
    e^{-\beta W(\{x_1\} \mid \eta\cup \{\xx_{2,n}\})} \cdot \\
    &\sum_{l=0}^\infty\frac{1}{l!}
    \int_{(\R^d)^l}\prod_{j=1}^l f_\beta(x_1-y_j)
    \tilde{\vartheta}^{(n-1+l)}(z;\xx_{2,n},\yy_l;\eta)\yy_l\dK_{\varsigma,\phi} \dxx_n
\end{align*}
for all $\boldsymbol{F}\in E_{\nu}^1$.
The limit $\tilde{\btt}(z)$ thus satisfies \req{KNS} and since the solution of \req{KNS} is unique, one finds $\tilde{\btt}(z)=\btt(z)$. It can thus be concluded that for every $\boldsymbol{F}\in E_{\nu}^1$ there holds
\begin{equation}\label{eq:weakstarconv}
\begin{split}
    \lim_{\Lambda\nearrow \R^d}
    &\sum_{n=1}\frac{1}{n!}\int_{(\R^d)^n}\int_{\Gamma}
    F^{(n)}(\xx_n;\eta) \vartheta^{(n)}_\Lambda(z;\xx_n;\eta)\dK_{\varsigma,\phi} \dxx_n \\
    =
    &\sum_{n=1}\frac{1}{n!}\int_{(\R^d)^n}\int_{\Gamma}
    F^{(n)}(\xx_n;\eta) {\vartheta}^{(n)}(z;\xx_n;\eta)\dK_{\varsigma,\phi} \dxx_n.
\end{split}
\end{equation}
This also implies that 
\begin{align*}
    \lim_{\Lambda\nearrow\R^d}
    &\int_{(\R^d)^n}
    \int_{\Gamma}\mathds{1}_{\Lambda^n }(\xx_n)
    F(\xx_n; \eta_\Lambda )
    \vartheta_\Lambda^{(n)}(z;\xx_n; \eta ))
    \dK_{\varsigma,\phi}(\eta)\dxx_n \\
    =
    &\int_{(\R^d)^n}
    \int_{\Gamma}
    F(\xx_n; \eta ) 
    \vartheta^{(n)}(z;\xx_n;\eta) 
    \dK_{\varsigma,\phi}(\eta)\dxx_n
\end{align*}
since
\begin{align*}
    &\int_{(\R^d)^n}
    \int_{\Gamma}\mathds{1}_{\Lambda^n}(\xx_n )
    F(\xx_n; \eta)
     \vartheta_\Lambda^{(n)}(z;\xx_n; \eta))
    \dK_{\varsigma,\phi}(\eta)\dxx_n \\
    &-
     \int_{(\R^d)^n}
    \int_{\Gamma}
    F(\xx_n; \eta ) 
     \vartheta^{(n)}(z;\xx_n;\eta) 
    \dK_{\varsigma,\phi}(\eta)\dxx_n \\
    &=\int_{(\R^d)^n}
    \int_{\Gamma}
    \big(\mathds{1}_{\Lambda^n }(\xx_n)
    F(\xx_n; \eta_\Lambda  )-F(\xx_n; \eta  )\big)
     \vartheta_\Lambda^{(n)}(z;\xx_n; \eta))
    \dK_{\varsigma,\phi}(\eta)\dxx_n 
    \\&+ 
    \int_{(\R^d)^n}
    \int_{\Gamma}
    F(\xx_n; \eta ) 
    \big(\vartheta_\Lambda^{(n)}(z;\xx_n; \eta)) -\vartheta^{(n)}(z;\xx_n;\eta) \big)
    \dK_{\varsigma,\phi}(\eta)\dxx_n
\end{align*}
and the first integral goes to zero by dominated convergence since $\vartheta_\Lambda^{(n)}(z;\,\cdot\,;\,\cdot\,) $ is bounded and the second by \req{weakstarconv}. Defining 
\begin{align}\label{eq:papaKWFV}
    \kappa^{(n)}_\Lambda(z;\xx_n;\eta)
    =\frac{j_\Lambda^{(N_\Lambda(\eta)+n)}(z;\xx_n,\eta_\Lambda)}
    {\jan{N_\Lambda(\eta)}(z;\eta_\Lambda)}
    =
    (-1)^n\vartheta^{(n)}_\Lambda(-z;\xx_n;\eta)
\end{align}
and 
\begin{align}\label{eq:papaKW}
    \kappa^{(n)} (z;\xx_n;\eta)
    =(-1)^n\vartheta^{(n)}(-z;\xx_n;\eta).
\end{align}
one sees that the expressions in \req{papaKWFV} and \req{papaKW} are nonnegative and there holds
\begin{align*}
    \lim_{\Lambda\nearrow\R^d}\int_\Gamma 
   \sum_{x_1,\dots,x_n\in\eta_\Lambda \atop x_i\neq x_j}F(\xx_n;\eta_\Lambda) 
    \dK_{\varsigma,\phi}(\eta)
    &= \lim_{\Lambda\nearrow\R^d}
    \int_{(\R^d)^n}
    \int_{\Gamma}\mathds{1}_{\Lambda^n }(\xx_n )
    F(\xx_n; \eta_\Lambda)
     \kappa_\Lambda^{(n)}(z;\xx_n; \eta))
    \dK_{\varsigma,\phi}(\eta)\dxx_n\\
    &=
    \int_{(\R^d)^n}
    \int_{\Gamma}
    F(\xx_n; \eta ) 
     \kappa^{(n)}(z;\xx_n;\eta) 
    \dK_{\varsigma,\phi}(\eta)\dxx_n
\end{align*}
On the other hand one finds that for nonnegative $F$ there holds
\begin{align*}
    \lim_{\Lambda\nearrow\R^d}\int_\Gamma 
   \sum_{x_1,\dots,x_n\in\eta_\Lambda \atop x_i\neq x_j}F(\xx_n;\eta_\Lambda) 
    \dK_{\varsigma,\phi}(\eta)
    =\int_\Gamma 
    \sum_{x_1,\dots,x_n\in\eta \atop x_i\neq x_j}F(\xx_n;\eta ) 
    \dK_{\varsigma,\phi}(\eta)
\end{align*}
which proves \req{KirkwoodGNZ}.

\end{proofthm}

\section{Extension to higher order closures}
The ansatz \req{easycorrelations} with $\varphi=e^{-\beta u}$ can (in light of \req{Hpair}) be rewritten as
\begin{align}\label{eq:easycorrH}
    \rho^{(n)}(\xx_n) =  
    \varsigma^n e^{-\beta H(\xx_n)}.
\end{align}
The definition \req{easycorrH} continues to make sense when the Hamiltonian $H$ is not given by a simple pair interaction, but more complicated multi-body potentials, i.e. for each $n\geq 2$ there holds
\begin{align}\label{eq:multbodH}
    H(\xx_n) = \sum_{l=2}^n
    \sum_{1 \leq i_1<\dots <i_l \leq n}
    u^{(l)}(\xx_{i_l})
\end{align}
for some family $(u^{(l)})_{l\geq 2}$ of $l$-body interaction potentials $u^{(l)}\colon (\R^d)^l\to \R$. Here $\xx_{i_l}=(x_{i_1},\dots , x_{i_l})$. In this case for $\eta,\gamma\in\Gamma_0$ one can also define an interaction $W$ as in \req{WKprop} as
\begin{align}\label{eq:generalint}
    W(\eta\mid \gamma )=H(\eta\cup\gamma)-H(\eta)  - H(\gamma).
\end{align}
Note that $W$ can also be defined for $\gamma\in \Gamma$ under some additional conditions on $H$, e.g.~if the potentials $(u^{(l)})_{l\geq 2}$ have finite range.\\
The ansatz \req{easycorrH} with $H$ given by \req{multbodH} leads to the \emph{multi-body Kirkwood-Salsburg operator}, cf.~\cite{Moraal76}. The only difference to the two-body setting is the definition of the integral kernel of $\K$.\\
From \req{KGammadef1} and \req{KGammadef2} it follows that for a Hamiltonian $H$ given by \req{Hpair} the kernel of the Kirkwood-Salsburg equation with boundary condition $\eta \in \Gamma$ is given by
\begin{align}\label{eq:k2}
    k^{(2)}(x;\yy_k;\xx_n,\eta)=e^{-\beta W(\{x\}\mid \eta\cup\{\xx_n\})} \prod_{i=1}^k \mayer(x-y_j).
\end{align}
Using \req{mfct} one can expand the product on the right-hand side of \req{k2} to get
\begin{align*}
    k^{(2)}(x;\yy_k;\xx_n,\eta)=e^{-\beta W(\{x\}\mid \eta \cup \{\xx_n\})}
    \sum_{l=0}^k \sum_{1 \leq i_1<\dots <i_l \leq k}(-1)^{k-l}
    \prod_{j=1}^l e^{-\beta u(x -y_{i_j})}.
\end{align*}
Since the interaction $W$ is linear in the second argument there holds 
\begin{align*}
    W(\{x\}\mid \eta \cup\{\xx_n\}) +  \sum_{j=1}^lu(x -y_{i_j})
    = W(\{x\}\mid  \eta \cup\{\xx_n,\yy_{i_l}\})
\end{align*}
and thus
\begin{align}\label{eq:genkern}
    k^{(2)}(x;\yy_k;\xx_n,\eta)
    = \sum_{l=0}^k \sum_{1 \leq i_1<\dots <i_l \leq k}(-1)^{k-l}
    e^{-\beta 
    W(\{x\}\mid \eta \cup \{\xx_n,\yy_{i_l}\})}.
\end{align}
This representation of $k^{(2)}$ via \req{genkern} continues to make sense when $H$ is given by \req{multbodH} by using \req{generalint}, thus the kernel of the multi-body Kirkwood-Salsburg equations is defined as
\begin{align}\label{eq:Kirkwoodcomplicated}
    k^{(H)}(x;\yy_k;\xx_n,\eta):=
    \sum_{l=0}^k \sum_{1 \leq i_1<\dots <i_l \leq k}(-1)^{k-l}e^{-\beta 
    W(\{x\}\mid \eta \cup \{\xx_n,\yy_{i_l}\})}
\end{align}
or equivalently as
\begin{align*} 
    k^{(H)}(x;\yy_k;\xx_n,\eta):=
    \sum_{l=0}^k \sum_{1 \leq i_1<\dots <i_l \leq k}(-1)^{k-l}\exp\left(-\beta \big[ H(\{x,\xx_n,\eta,\yy_{i_l}\})
    - H(\{\xx_n,\eta,\yy_{i_l}\})\big]\right).
\end{align*}
The multi-body Kirkwood-Salsburg operator with boundary condition is then defined in an analogous way as in Subsection \ref{ss:locstab} by
\begin{align}\label{eq:KSmH1}
    (\K\bo)^{(1)}(x;\eta )=
    \sum_{k=1}^\infty\frac{1}{k!}\int_{(\R^d)^k}
    k^{(H)}(x;\yy_k;\eta)
    \theta^{(k)}(\yy_k;\eta)\dyy_k
\end{align}
and for $n\geq 1$ by
\begin{align}\label{eq:KSmH2}
     (\K\bt)^{(n+1)}(x,\xx_{n};\eta)=  
    k^{(H)}(x;\xx_n,\eta)\theta^{(n)}(\xx_{n};\eta)+
    \sum_{k=1}^\infty\frac{1}{k!}\int_{(\R^d)^k}
    k^{(H)}(x;\yy_k;\xx_n,\eta)
    \theta^{(n+k)}(\xx_{n},\yy_k;\eta)\dyy_k.
\end{align}
The case of empty boundary conditions follows by choosing $\nu = \delta_{\emptyset}$.
Now it only remains to be shown that the operator $\K$ with the kernel $k^{(H)}$ is in $\mathcal{L}(E_{\zeta,\nu})$ for some $\zeta>0$ and some measure $\nu$ on $(\Gamma,\mathscr{F})$. In this case for $|z|$ sufficiently small the solution of \req{fvKS} is again given by the Neumann-series \req{fvKSsol}. 

\begin{theorem}\label{thm:thm2}
Let $H$ be a stable and hereditary Hamiltonian given by \req{multbodH}. If there are $\zeta,\delta>0$ such that the multi-body Kirkwood-Salsburg operator $\K\colon E_\zeta\to E_\zeta$ defined by \req{KSmH1} and \req{KSmH2} is bounded with norm $\|\K\|_{E_\zeta\to E_\zeta}\leq  \delta$, then for $\varsigma < \delta$ there exists a tempered point process $\PP$ with correlation functions $(\rho^{(n)})_{n\geq 1}$ given by \req{easycorrH}.
\end{theorem}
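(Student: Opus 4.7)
The plan is to follow the structure of the proof of Theorem \ref{thm:mainthm} step by step, replacing the pair-interaction Kirkwood--Salsburg operator by its multi-body analog defined in \req{KSmH1} and \req{KSmH2}. First, since $\K\colon E_\zeta\to E_\zeta$ is bounded (with norm $\leq \delta$), the operator $\II-\varsigma\bx\K$ is invertible on $E_\zeta$ for all sufficiently small activities (namely $|\varsigma|<1/\delta$), and the finite-volume multi-body Kirkwood--Salsburg equation $(\II-\varsigma\bx\K)\bt = \varsigma\bx\ee$ admits the unique Neumann-series solution $\bt_\Lambda(\varsigma) = \sum_{k\ge 0}(\varsigma\bx\K)^k \varsigma\bx\ee$. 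In particular, each $\theta^{(n)}_\Lambda(\varsigma;\xx_n)$ is analytic in $\varsigma$ on this disk. The whole point of the definition \req{Kirkwoodcomplicated} is that the explicit formula
\[
  \theta^{(n)}_\Lambda(\varsigma;\xx_n) \;=\; \frac{1}{\Xi_\Lambda(\varsigma)}\sum_{m=0}^\infty \frac{\varsigma^{n+m}}{m!}\int_{\Lambda^m}e^{-\beta H(\{\xx_n,\ww_m\})}\dww_m
\]
with $H$ given by \req{multbodH} is a solution of the multi-body Kirkwood--Salsburg equations; this is a direct inclusion-exclusion computation (cf.~\cite{Moraal76}), and by uniqueness of the Neumann solution the two expressions coincide.

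Second, Ruelle's argument for the non-vanishing of $\Xi_\Lambda(\varsigma)$ transfers verbatim: integrating the explicit formula for $\theta^{(1)}_\Lambda(\varsigma;x)$ over $\Lambda$ and differentiating $\Xi_\Lambda$ yields $\int_\Lambda \theta^{(1)}_\Lambda(\varsigma;x)\dx = \varsigma\,\tfrac{\diff}{\diff \varsigma}\log\Xi_\Lambda(\varsigma)$. Since the left-hand side is analytic in $\varsigma$, so is the right-hand side, whence $\Xi_\Lambda(\varsigma)\neq 0$ on the disk; combined with $\Xi_\Lambda(0)=1$, this gives $\Xi_\Lambda(\varsigma)>0$ on the real segment inside the disk. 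The analogs of Proposition \ref{prop:nozeros} and Corollary \ref{cor:pos} now go through unchanged, providing $(-1)^n\theta^{(n)}_\Lambda(-\varsigma;\xx_n)\geq 0$ for admissible $\varsigma>0$ and all $\xx_n\in\Lambda^n$.

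Third, the proof is completed exactly as for Theorem \ref{thm:mainthm}: stability of $H$ gives Ruelle's bound \req{Rbound} with $\xi=\varsigma e^{\beta B}$ since $\rho^{(n)}(\xx_n)=\varsigma^n e^{-\beta H(\xx_n)}\leq (\varsigma e^{\beta B})^n$; Lenard positivity \req{lenpos2} follows from $\Xi_\Lambda(-\varsigma)>0$; and Lenard positivity \req{lenpos1} follows from the identity
\[
  \sum_{k=0}^\infty\frac{(-1)^k}{k!}\int_{\Lambda^k}\rho^{(n+k)}(\xx_n,\yy_k)\dyy_k \;=\; (-1)^n\Xi_\Lambda(-\varsigma)\,\theta^{(n)}_\Lambda(-\varsigma;\xx_n)\;\geq\;0,
\]
which is again \req{solxpl} read at $-\varsigma$. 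Lenard's Theorem \ref{thm:lenard} then produces the desired point process $\PP$, which is tempered by \req{Rbound}.

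The genuinely new work, and the main obstacle, lies in step one: verifying that the partition-function formula really solves the multi-body Kirkwood--Salsburg equations with kernel \req{Kirkwoodcomplicated}. The signed-subset structure of \req{Kirkwoodcomplicated} is crafted precisely so that, when convolved with correlation functions of the above form, the telescoping identity expressing $\theta^{(n+1)}_\Lambda$ through the $\theta^{(n+k)}_\Lambda$ holds; once this combinatorial identity is established, the remaining steps of the argument are essentially a notational translation of the two-body proof, and the operator-theoretic hypothesis $\|\K\|_{E_\zeta\to E_\zeta}\leq \delta$ is used only to guarantee analyticity of the Neumann series on a nontrivial disk of activities.
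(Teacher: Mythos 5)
Your proof is correct and follows essentially the route the paper intends: after introducing the multi-body kernel \req{Kirkwoodcomplicated}, the paper simply remarks that once the associated operator is bounded ``the proof can be repeated as in the two-body case,'' and your argument makes this explicit by re-running the same Neumann-series, explicit-formula, and Ruelle-analyticity chain, correctly isolating Moraal's inclusion--exclusion identity (that the partition-function formula still solves the multi-body Kirkwood--Salsburg equations) as the one genuinely new ingredient. The only discrepancy is the activity threshold: your (correct) Neumann-series argument yields invertibility, and hence the positivity needed for Lenard's theorem, on $|\varsigma|<1/\delta$, whereas the theorem as printed asserts the conclusion for $\varsigma<\delta$; by analogy with the two-body case, where $z_0=(e^{2\beta B+1}\cbeta)^{-1}$ is the reciprocal of the operator-norm bound, the intended threshold must be $\varsigma<1/\delta$ (equivalently, the hypothesis should require the operator norm to be at most $1/\delta$), so this is a typo in the theorem statement rather than a gap in your argument.
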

\begin{example}\label{ex:theex}
The multi-body Kirkwood-Salsburg operator with empty boundary conditions is bounded in the following cases:
\begin{itemize}
    \item Let $H$ be given by \req{multbodH} with a family of $n$-body interactions $(u^{(n)})_{n\geq 2}$ where  
    \begin{align*}
        u^{(2)} (x,y)= u(x-y)
    \end{align*}
    for some stable and regular pair interaction $u\colon \R^d \to \R\cup\{+\infty\}$ and nonnegative translationally invariant $u^{(n)} \colon (\R^d)^n\to [0,+\infty)$ for $n\geq 3$ and in addition there is an $R>0$ with
    \begin{align*}
        u^{(n)}(\xx_n) = 0 , \qquad \text{ for all }n \geq 3
    \end{align*}
    whenever there are indices $i\neq j\in \{1,\dots,n\}$ such that $|x_i-x_j|\geq R$. Then, the multi-body Kirkwood-Salsburg operator is bounded, see \cite{Skrypnik08}. Skrypnik uses a symmetrized operator to ensure \req{locstab} holds.
    \item Let $H$ be given by \req{multbodH}  with a family of $n$-body interactions $(u^{(n)})_{n\geq 2}$ where $u^{(n)}\equiv 0$ for $n\geq 4$ and 
    \begin{align*}
        u^{(2)} (x,y)= u(x-y)
    \end{align*}
    for some stable and regular pair interaction $u\colon \R^d \to \R\cup\{+\infty\}$. Concerning $u^{(3)}$ assume further, that there is a $m \in\N$ and functions $\phi_l\colon \R^d\to\R $, $1\leq l \leq m$, such that 
    \begin{align*}
        \int_{\R^d}\left(\sum_{l=1}^m l^2 
        \phi_l^2(x)\right)^{\tfrac{1}{2}}\dx < +\infty
    \end{align*}
    and 
    \begin{align*}
        u^{(3)} (\xx_3) =   2 \sum_{l=1}^{m} \phi_l(x_2-x_1)\phi_l(x_3-x_1).
    \end{align*}
    Then, the multi-body Kirkwood-Salsburg operator is bounded, cf.~\cite{Skrypnik06}.
\end{itemize}
\end{example}
\begin{remark}
In the proof one has to first look at the locally stable case using the strategy outlined in Remark \ref{rem:kuna} before proving the general case as in Section \ref{sec:results}.
\end{remark}
\begin{remark}
When defining the operator $\K_\Gamma$ on an appropriate space $E_{\zeta,\Gamma}^\infty$ with the kernel $k^{(H)}$ one can also prove an analogous version of Theorem \ref{thm:thm3} (if the two-body potential includes a hard-core or is nonnegative), provided \req{wcont} holds, e.g.~if $u$ in the first setting of Example \ref{ex:theex} is also lower regular. For the higher-order potentials \req{wcont} trivially holds as they are of finite range. 
\end{remark}

\section*{Declarations}
Data sharing is not applicable to this article as no datasets were generated or analysed. The author states that there is no conflict of interest.



\begin{thebibliography}{13} 
\bibitem{Ambartzumian91}
{R.V.~Ambartzumian and H.S.~Sukiasian}:
{Inclusion-exclusion and point processes},
Acta Appl. Math. \textbf{22} (1991), pp.~15--31.
\bibitem{Hanke18c}
{M.~Hanke}:
{Well-Posedness of the Iterative Boltzmann Inversion,}
J. Stat. Phys. \textbf{170} (2018), pp.~536--553.
\bibitem{HansenMcDonald13}
{J.-P.~Hansen and I.R.~McDonald}:
{Theory of Simple Liquids, Fourth Ed.},
Academic Press, Oxford, (2013).
\bibitem{Kendall99}
{W.S.~Kendall, and J.~M\o{}ller}:
Perfect Metropolis-Hastings simulation of locally stable point
processes. 
Research Report {\bf 347}, Department of Statistics, University of Warwick, Coventry, UK (1999).
\bibitem{Kirkwood42}
{J.G.~Kirkwood and E.M.~Boggs}:
{The radial distribution function in liquids},
J. Chem. Phys. {\bf 10} (1942), pp.~394--402.
\bibitem{KunaPhD}
{T.~Kuna}:
{Studies in configuration space analysis and applications},
PhD thesis, Rheinische Friedrich-Wilhelms-Universität Bonn (1999)
\bibitem{Kuna07}
{T.~Kuna, J.L.~Lebowitz and E.R.~Speer}:
{Realizability of point processes},
J. Stat. Phys. \textbf{129} (2007), pp.~417--439.
\bibitem{Lenard75}
{A.~Lenard}:
{States of classical statistical mechanical systems of infinitely many particles. II. Characterization of correlation measures},
Arch. Rational Mech. Anal. \textbf{59} (1975), pp.~241--256.
\bibitem{Moraal76}
{H.~Moraal}:
{The Kirkwood-Salsburg equation and the virial expansion for many-body potentials}, 
Phys. Lett. A {\bf 59} (1976), pp.~9--10.
\bibitem{Ruelle69}
{D.~Ruelle}:
{Statistical Mechanics: Rigorous Results}, 
W.A. Benjamin Publ., New York, (1969).
\bibitem{Ruelle70}
{D.~Ruelle}: 
{Superstable interactions in classical statistical mechanics}, 
Comm. Math. Phys. \textbf{18} (1970), pp.~127--159.
\bibitem{Scott06}
{A.D.~Scott and A.D.~Sokal}:
{On Dependency Graphs and the Lattice Gas},
Comb. Probab. Comput. \textbf{15} (2006), pp.~253--279.
\bibitem{Skrypnik06}
{V.I.~Skrypnik}:
{On Gibbs quantum and classical particle systems with three-body forces},
Ukr. Mat. J. \textbf{58} (2006), pp.~976--996.
\bibitem{Skrypnik08}
{V.I.~Skrypnik}:
{Solutions of the Kirkwood-Salsburg equation for particles with finite-range nonpairwise repulsion},
Ukr. Mat. J. \textbf{60} (2008), pp.~1329--1334.
\end{thebibliography}
\end{document}